\documentclass[aps,pra,twocolumn,superscriptaddress]{revtex4-2}
\usepackage[utf8]{inputenc}
\usepackage{amsmath, amsthm, amssymb, amsfonts, enumitem}
\usepackage{graphicx}
\usepackage{dsfont}
\usepackage{mathtools}
\usepackage{pgfplots}
\pgfplotsset{compat=1.18}
\usepgfplotslibrary{groupplots,statistics}
\usepackage{xcolor}
\usepackage[colorlinks=true, linkcolor=blue, urlcolor=black, citecolor=blue]{hyperref}
\usepackage{orcidlink}
\usepackage{physics}

\DeclareMathOperator{\spec}{spec}
\newcommand{\ot}{\otimes}
\newcommand{\one}{\mathds{1}}
\newcommand{\I}{\one}
\renewcommand{\a}{\alpha}

\newcommand{\N}{\mathds{N}}
\newcommand{\C}{\mathds{C}}
\newcommand{\cdk}{(\C^d)^{\otimes k}}

\newtheorem{theorem}{Theorem}
\newtheorem{observation}[theorem]{Observation}
\newtheorem{lemma}[theorem]{Lemma}
\newtheorem{result}[theorem]{Result}

\begin{document}
\title{
Entanglement measures and LOCC conversion criteria from Schur-Weyl duality
}

\author{Anton van Cleemput$^{\orcidlink{0009-0005-3744-3774}}$}
\affiliation{Sorbonne Université, CNRS, LIP6, F-75005 Paris, France}
\affiliation{Naturwissenschaftlich-Technische Fakult\"{a}t, Universit\"{a}t Siegen, Walter-Flex-Stra\ss e 3, 57068 Siegen, Germany}

\author{Otfried Gühne$^{\orcidlink{0000-0002-6033-0867}}$}
\affiliation{Naturwissenschaftlich-Technische Fakult\"{a}t, Universit\"{a}t Siegen, Walter-Flex-Stra\ss e 3, 57068 Siegen, Germany}

\author{Albert Rico$^{\orcidlink{0000-0001-8211-499X}}$}
\affiliation{Naturwissenschaftlich-Technische Fakult\"{a}t, Universit\"{a}t Siegen, Walter-Flex-Stra\ss e 3, 57068 Siegen, Germany}

\date{\today}
\begin{abstract}
% The goal of this paper is to impress Marco Túlio Quintino so Anton gets a good grade for his defense. (And a box of chocolate?)
The characterization of possible transformations between quantum states and the quantification of entanglement are central for understanding quantum resources. Explicit results are, however, scarce since they require complicated optimizations over transformation protocols or state decompositions. We introduce a method to optimize functions of the
entanglement spectrum of bipartite states over mixed-state decompositions with arbitrary accuracy. Our results are based on a convergent semidefinite programming hierarchy of optimizations over entanglement spectra using a natural polynomial basis and Schur-Weyl duality. Our approach can also be used
to determine non-polynomial measures, such as the entanglement of formation,
and to characterize the dimensionality of entanglement.
Moreover, extracting the decomposition allows us to formulate
criteria to determine state transformations under local operations
and classical communication (LOCC). In particular,
we provide a hierarchy of certificates of LOCC non-convertibility
from pure to mixed states in any dimension and a heuristic approach
for LOCC convertibility.
\end{abstract}

\maketitle
\section{Introduction}
Composite quantum systems are entangled when they cannot be described solely by their parts~\cite{epr}. Operationally, entanglement is a resource for quantum communication~\cite{distributed, eckert} and computation~\cite{one-way, shor} that cannot increase under local operations and classical communications (LOCC) between distant parties.
In isolated two-particle quantum systems, described by pure states, entanglement and LOCC transformations are completely characterized by
the Schmidt coefficients or the eigenvalues of the marginals~\cite{Nielsen1999MajLOCC, Vidal_2000}, also called
the entanglement spectrum. Therefore, learning this spectrum is
necessary and sufficient to understand the entanglement of the
system and the possible entanglement transformations between
states.

In practice, however, pure states become mixed due to decoherence with the environment~\cite{Zurek2003,BreuerPetruccione2002,NielsenChuang2010}. When this occurs, the spectrum of the marginals alone is not informative of the entanglement between them, because a mixed state admits multiple descriptions as a probabilistic ensemble of pure states. Consequently,
one needs to account for all such decompositions. This leads to an intractable optimization required to evaluate entanglement measures~\cite{Bennett1996EoF, Uhlmann1998, Uhlmann2010Roofs, Horodecki2009RMP}, even when full tomography is available.
Existing methods address each specific entanglement measure independently~\cite{Wootters1998, TerhalVollbrecht2000, VollbrechtWerner2001, Mintert2004Concurrence, Chen2005Concurrence, Chen2005EoF, Guhne2007Estimating, Eisert2007QuantWitness}. Moreover, while lower bounds can be found for polynomial measures~\cite{Toth2015CroofTP}, much less is known for the more general case~\cite{PlenioVirmani2007}. As a result, it is still a remarkable challenge to determine when entanglement transformations are possible with LOCC~\cite{Gour2005InfiniteConditions, Chitambar2014LOCC, Sauerwein_2018}.

Here we aim to provide a systematic method to learn the entanglement properties of mixed bipartite quantum states.
To this end we use the Schur-Weyl decomposition of the Hilbert space of multiple copies, which allows one
to learn entanglement spectra of pure states \cite{Alicki1988, Keyl2001EstimatingSpectrum} to estimate any polynomial entanglement measure for mixed states with a complete polynomial basis. This basis is particularly suited to detect entanglement dimensionality. We find upper and lower converging bounds, which determine, for instance, the amount of entanglement of the main families of bound entangled states. Using semidefinite and linear programming, we extend them to a convergent hierarchy to find any possible entanglement measure of mixed states.
The combination of these bounds leads us to a hierarchy of certificates to determine LOCC convertibility from pure to mixed and a heuristic approach towards convertibility.

\section{Entanglement spectrum}

Consider a bipartite pure state $\ket{\psi}_{AB} \in \C^{d_A} \otimes \C^{d_B} \cong \C^D$ with $D = d_Ad_B$. Such a state is separable if it can be written as $\ket{\psi}=\ket{a}\otimes\ket{b}$ and entangled otherwise. More generally, any pure bipartite state admits a \emph{Schmidt decomposition}
\begin{equation}\label{eq:SchmidtDec}
    \ket{\psi}_{AB} = \sum_{i=1}^r \sqrt{\a_i}\ket{a_i}\otimes\ket{b_i},
\end{equation}
where $\a_1\geq\a_2\geq\dots\geq\a_r>0$ and the sets $\{\ket{a_i}\}_i$ and $\{\ket{b_i}\}_i$ are orthonormal. The spectrum $\a=(\a_1,\dots,\a_{d_A})$ of the reduced density matrix $\rho_A=\tr_B\dyad{\psi}$ (with $\a_i=0$ for $i>r$) is called the {\em entanglement spectrum} of $\ket{\psi}_{AB}$, and the number $r$ of nonzero coefficients is known as the {\em Schmidt rank}, denoted $\mathrm{SR}(\psi)$.
By the Nielsen theorem~\cite{Nielsen1999MajLOCC}, a bipartite pure state $\ket{\psi}$ with entanglement spectrum $\a(\psi)$ can be converted by LOCC into another bipartite pure state $\ket{\phi}$ with entanglement spectrum $\a(\phi)$ if and only if $E_j(\psi)\geq E_j(\phi)$ for all $j=1,\dots,d_A-1$, where
\begin{equation}\label{eq:Vidal}
    E_j(\psi)=1-\sum_{i=1}^{j} \a_i(\psi),
\end{equation}
as defined in Ref.~\cite{Vidal_1999}. In this sense, the quantities $E_j(\psi)$ form a complete set of entanglement monotones, and all information about the entanglement of each state is encoded in the entanglement spectrum. Every pure-state entanglement monotone is a concave, unitarily invariant function of $\rho_A$~\cite{Vidal_2000}.

A pure-state entanglement measure $E$ can be extended to mixed states via the convex roof extension,
\begin{equation}\label{eq:CRoof}
    \widehat{E}(\rho) = \min_{\mathcal{D}(\rho)} \sum_i p_i E(\psi_i),
\end{equation}
where the minimization is taken over all decompositions
$\mathcal{D}(\rho)=\{p_i,\ket{\psi_i}\}$ of $\rho$ such that $\rho = \sum_i p_i \dyad{\psi_i}$. However, this optimization is generally intractable, even if the state is known.

\section{Decomposition of the Hilbert space}
Let us now explain facts about the Hilbert space of $k$ copies of a state $\ket{\psi}$, since this is the basis of our approach.
A general $k$-qudit quantum state is a vector in the Hilbert space $\cdk$. We consider two groups acting on this space: the \emph{symmetric group} $S_k$ of permutations $\pi\in S_k$ of $k$ elements, acting as
\begin{align}
    \pi (\ket{\psi_1} \ot \dots \ot \ket{\psi_k}) &= \ket{\psi_{\pi^{-1}(1)}} \ot \dots \ot \ket{\psi_{\pi^{-1}(k)}},
\end{align}
and the \emph{general linear group} $GL_d$ of invertible $d\times d$ matrices $g\in GL_d$, acting as
\begin{align}
g(\ket{\psi_1} \ot \dots \ot \ket{\psi_k}) &=g\ket{\psi_1} \ot \dots \ot g\ket{\psi_k}.
\end{align}
These two actions commute,
\begin{equation}\label{eq:PerUniCommute}
    \pi \, g = g \, \pi.
\end{equation}
The so-called \emph{Schur-Weyl duality}~\cite{fulton2013representation} states that under the action of these two groups, the Hilbert space $\cdk$ decomposes as
\begin{equation}
    \cdk = \bigoplus_{\substack{\lambda\vdash k \\ \ell(\lambda)\leq d}} \mathcal{S}^\lambda_k \ot \mathcal{U}^\lambda_d
\end{equation}
into irreducible subspaces $\mathcal{S}^\lambda_k$ and $\mathcal{U}^\lambda_d$, which carry the irreducible representations
of $S_k$ and $GL_d$, respectively, indexed by partitions $\lambda$. The dimensions of $\mathcal{S}^\lambda_k$ and $\mathcal{U}^\lambda_d$ are denoted as $f^\lambda$ and $s_\lambda(1,\dots,1)$, where $s_\lambda$ is the so-called Schur polynomial (see also below). Each partition $\lambda$ is represented as a Young diagram: a non-increasing vector of positive integers summing to $k$, whose length we denote by $\ell(\lambda)$.  We write $\lambda \vdash k$ to denote all the partitions $\lambda$ of $k$.

To each Young diagram $\lambda$ we can associate a so-called {\em isotypic projector} $\Pi^\lambda$ onto the subspace $\mathcal{S}^\lambda_k \ot \mathcal{U}^\lambda_d$. This projector is given by
\begin{align}
    \Pi^\lambda = \frac{f^\lambda}{k!}\sum_{\pi \in S_k} \chi^\lambda(\pi)\, \pi,
\end{align}
where $\chi^\lambda(\pi)$ is the character of the cycle type of $\pi$ in the irreducible representation $\lambda$. It follows that the isotypic projectors commute with both permutations and tensor powers of matrices,
\begin{equation}
    \pi\,\Pi^\lambda = \Pi^\lambda\,\pi\,,\quad g\,\Pi^\lambda = \Pi^\lambda\,g\,.
\end{equation}

\section{Polynomial optimization over mixed-state decompositions}
We are now ready to present the main tool that we use to address the problem of entanglement estimation and convertibility certification. The spectrum of a reduced state $\rho_A$ can be estimated via weak Schur sampling on $k$ copies of $\rho_A$~\cite{Alicki1988,Keyl2001EstimatingSpectrum}: the probability of measuring a Young diagram $\lambda \vdash k$ is given by
\begin{equation}
    \tr(\Pi^\lambda \rho_A^{\otimes k}) = f^\lambda s_\lambda(\spec\rho_A).
\end{equation}
It is known that the Schur polynomials $s_\lambda$ with $\lambda \vdash k$ and $\ell(\lambda) \leq d_A$ form a basis of the homogeneous symmetric polynomials $P$ of degree $k$ of the spectrum of $\rho_A$ \cite{MacdonaldSymmFunc}, and they expand into the monomial basis in the following way:
\begin{equation}
    s_\lambda(\spec\rho_A) = \sum_{\mu \vdash k} K_{\lambda\mu}m_\mu(\spec\rho_A),
\end{equation}
where $m_\mu$ is the monomial symmetric polynomial
\begin{equation}
    m_\mu(\a) = \sum_{\beta} \a_1^{\beta_1}\a_2^{\beta_2}\dots\a_{d_A}^{\beta_{d_A}}
\end{equation}
where the sum runs over all {\em distinct} permutations $\beta$ of $(\mu_1, \mu_2, \dots, \mu_{\ell(\mu)}, 0,\dots, 0) \in \N^{d_A}$, and $K_{\lambda\mu}$ is a constant known as the Kostka number~\cite{MacdonaldSymmFunc}.

Since $\tr(\rho_A) = 1$ implies $\tr(\rho_A)^{l} = 1$, any polynomial of degree $l < k$ can be lifted to a homogeneous polynomial of degree $k$. Thus, any symmetric polynomial of degree at most $k$ of the entanglement spectrum can be expressed as
\begin{equation}
    P(\spec\rho_A) = \sum_{\lambda\vdash k}c_\lambda s_\lambda(\spec\rho_A).
\end{equation}

In order to estimate any entanglement measure of bipartite states given by a symmetric polynomial, we are interested in the convex roof of these polynomials,
\begin{equation}\label{eq:PrhoABcroof}
    \widehat{P}(\rho_{AB}) = \min_{\mathcal{D}(\rho_{AB})} \sum_i p_i \tr(O_P\, \rho_{A,i}^{\otimes k}),
\end{equation}
where $\rho_{A,i}=\tr_B\dyad{\psi_i}$ and
\begin{equation}\label{eq:OP}
    O_P = \sum_{\lambda \vdash k} \frac{c_\lambda}{f^\lambda} \Pi^\lambda
\end{equation}
for any tuple of coefficients $\vec{c}:=(c_{\lambda_1},\dots,c_{\lambda_t})$, where $t$ is the number of partitions of $k$. $\widehat{P}$ is an entanglement monotone if $P$ is concave~\cite{Vidal_2000}. For any $P$, we always have unitary invariance since $P$ only depends on the spectrum.

\section{A complete hierarchy for entanglement estimation}
Our approach consists of using the spectrum estimation formulation above for mixed-state decompositions. To this end, we use that polynomial entanglement measures can be bounded using
optimizations over functions of several copies of a state~\cite{Toth2015CroofTP}.
That is, given a bipartite mixed state $\rho_{AB}$ acting on $\C^{d_A} \otimes \C^{d_B}$, to any decomposition in $\mathcal{D}(\rho_{AB})$ we associate the state
\begin{equation}\label{eq:omega}
    \omega_k = \sum_i p_i \dyad{\psi_i}^{\otimes k}
\end{equation}
acting on $(\C^{d_A} \otimes \C^{d_B})^{\otimes k}$.
The core idea is that a polynomial entanglement measure of degree not larger than $k$ can be written as a linear function on $\omega_k$. Then, optimizing over all decompositions amounts to optimizing over all $\omega_k$ with
the constraint that $\omega_k$ is symmetric and fully separable with respect to the $k$ copies. The separability constraint is difficult to implement, but it can be treated by a hierarchy of semidefinite approximations, similar to the separability criterion of symmetric extensions
\cite{doherty2002}.

Explicitly, we can compute a sequence
of lower bounds on Eq.~\eqref{eq:PrhoABcroof} as $\widehat{P}(\rho)\geq P_{{\rm LB}_k}(\rho)$ with the following semidefinite programming (SDP) relaxation,
\begin{align}
    P_{{\rm LB}_k}(\rho)=\min_{\Omega_k} \ &\tr\!\left(O_P^A \otimes \I^B \, \Omega_k\right)\label{eq:SDPcroof1}\\
    \text{s.t. } & \Omega_k \succeq 0, \nonumber\\
    & \Pi_\text{sym}^{AB,k} \, \Omega_k = \Omega_k\,\Pi_\text{sym}^{AB,k}  = \Omega_k,\nonumber\\
    & \tr_{2,\dots,k} (\Omega_k) = \rho, \nonumber\\
    & \Omega_k^{T_l} \succeq 0 \quad  \forall \, l \in \{1, \dots, \lfloor \tfrac{k}{2}\rfloor\}.\nonumber
\end{align}
Here, $O_P^A$ acts on $A_1,\dots,A_k$. When $\deg(P) = m < k$, $P$ can be lifted to degree $k$, or, equivalently, evaluated on $A_1,\dots,A_m$ as both give the same bound (Lemma~\ref{lem:lifting}). $\Pi_\text{sym}^{AB,k}$ is the projector onto the $k$-copy symmetric subspace of $(\C^{d_A}\otimes\C^{d_B})^{\otimes k}$ across copies, and $T_l$ denotes the partial transpose on the first $l$ copies. Since every $\omega_k$ is in the feasible set of this SDP at every level $k$, it gives a lower bound on the convex roof. This yields an SDP hierarchy of $k$-th order lower bounds. This hierarchy is complete, as shown in Appendix~\ref{app:ProofCongNonExtHier}.
\begin{result}\label{thm:ConvergenceSDPcroof}
The bound of Eq.~\eqref{eq:SDPcroof1} is saturated asymptotically, $\lim_{k\rightarrow\infty} P_{{\rm LB}_k}(\rho) = \widehat{P}(\rho)$.
\end{result}
The proof sketch is as follows: by the de Finetti theorem for Bose-symmetric states~\cite{caves2002unknown,Christandl_2007}, the $m$-copy marginal of any feasible $\Omega_k$ is asymptotically close to a mixture $\sum p_i \dyad{\phi_i}^{\otimes m}$ whose first system marginal is asymptotically close to $\rho$ as $k$ increases. By continuity of the convex roof, $P_{{\rm LB}_k} \geq \widehat{P}(\rho) - \delta_k$ where $\lim_{k\to \infty}\delta_k=0$. Moreover, the bound is nondecreasing in $k$ (Appendix~\ref{app:monotonicity}).

Since $\Omega_k$ is supported on the symmetric subspace $\text{Sym}^k(\C^{d_A}\otimes\C^{d_B})\subset(\C^{d_A}\otimes\C^{d_B})^{\otimes k}$, we can exploit this symmetry to reduce the matrix size of $\Omega_k$ from $D^{k}$ to $\dim \text{Sym}^k(\C^{d_A}\otimes\C^{d_B}) = \binom{k+D-1}{k}$. By projecting onto the range of $\rho$, we can further reduce this size to $\binom{k+r-1}{k}$, with $r=\operatorname{rank}\rho$. Moreover, the variable $\Omega_k$ can be taken real for real states. These symmetry reductions are explained in detail in Appendix~\ref{app:SymRed}.

Naturally, using the isotypic polynomial basis allows us to detect the entanglement dimension, i.e. the Schmidt number~\cite{Terhal_2000}, of a target mixed state. To see this, recall that $\Pi^\lambda\ket{e_1}\otimes\dots\otimes\ket{e_k}=0$ if fewer than $\ell(\lambda)$ distinct vectors of an orthonormal basis $\{\ket{e_i}\}$ appear in the product, and therefore $\tr\big(\Pi^\lambda_A\otimes\one_B \dyad{\psi}_{AB}^{\otimes k}\big) = 0$ if and only if $\ell(\lambda) > \operatorname{SR}(\psi_{AB})$~\cite{Christandl_2005,lovitz2026TreeTNtest}.
Therefore, if $P_{{\rm LB}_k}\neq 0$ up to the solver precision, then no pure-state decomposition of $\rho$ exists in which all states have Schmidt rank lower than $\ell(\lambda)$. This is formally described as follows:
\begin{result}
   If $P_{{\rm LB}_k}> 0$ is certified for $O_P=\Pi^\lambda$ with $\lambda\vdash k$, then $\rho$ has Schmidt number at least $\ell(\lambda)$.
\end{result}
As a simple example, we test this Schmidt number hierarchy on the state $ \rho = p \dyad{\phi^+} + (1-p)\dyad{01}$ acting on $\C^3 \otimes \C^3$, with $\ket{\phi^+}=\frac{1}{\sqrt3}\sum_{i=0}^{2}\ket{ii}$, studied in Ref.~\cite{Tavakoli_2024}, which has Schmidt number $3$ for $p > 0$, and we detect it from $p\approx 0.01$ at level $k=3$ with $\lambda = (1,1,1)$.
It is easy to see that the state $\rho$ has Schmidt number three for any $p > 0$ , since every pure state in a decomposition lies in the span of $\ket{\phi^+}$ and $\ket{01}$, and any such state  has Schmidt rank three.

\section{Simplified relaxation}

Notice that the isotypic projectors $\Pi^\lambda$ only act on Alice's subsystem copies. This leads us to ask whether the SDP can be simplified. In turn, here we present a simplification of the hierarchy above using a smaller variable that allows us to access larger systems, at the cost of weaker bounds at each level $P_{{\rm LB}_{k,m}} \leq P_{{\rm LB}_m}$ with $P_{{\rm LB}_{k,m}}$ defined in Eq.~\eqref{eq:marginals_SDP} below. For $m \geq k$, define the following extension acting on $(\C^{d_A})^{\otimes m}\otimes(\C^{d_B})^{\otimes k}$ (i.e., $m$ copies of $A$ and $k$ copies of $B$),
\begin{equation}\label{eq:omega_km}
    \omega_{k,m} := \sum_i p_i \dyad{\psi_i}^{\otimes k}\otimes \rho_{A,i}^{\otimes (m-k)}
\end{equation}
with $\rho_{A,i} = \tr_B(\dyad{\psi_i})$. We relax this operator with a positive semidefinite variable $\Phi_{k,m}$, and denote $\Phi_{k}:=\tr_{A_{k+1}, \dots, A_m}(\Phi_{k,m})$.
Then we can lower bound $\widehat{P}(\rho)$ as $P_{{\rm LB}_{k,m}}\leq \widehat{P}(\rho)$, where
\begin{align}\label{eq:marginals_SDP}
    P_{{\rm LB}_{k,m}}:=&\min_{\Phi_{k,m}} \tr(O_P^A \otimes \I^B \ \Phi_{k,m})\\
    \text{s.t. } & \Phi_{k,m}\succeq 0, \nonumber\\
    & \Pi_\text{sym}^{AB,k} \, \Phi_{k}\, \Pi_\text{sym}^{AB,k}= \Phi_{k},\nonumber\\
    & \tr_{A_2,\dots,A_m,B_2,\dots,B_k} (\Phi_{k,m}) = \rho_{AB}, \nonumber\\
    & \tr_{B_1,\dots,B_k}(\Phi_{k,m}) \text{ is permutation invariant},\nonumber\\
    & \Phi_{k,m}^{T_S} \succeq 0 \quad  \forall \, S.\nonumber
\end{align}
Here $O_P^A$ is the observable of Eq.~\eqref{eq:OP} for a polynomial of degree $m$, acting on $A_1,\dots,A_m$, and $T_S$ denotes the partial transpose on a set $S$ of registers, where transposing a full copy acts on both $A_i$ and $B_i$. Since $\Phi_{k,m}$ can be assumed invariant under permutations of $A_1B_1, \dots, A_kB_k$ and $A_{k+1}, \dots, A_m$, it suffices to impose partial transpose conditions on the tuples $(i, j)$ where each tuple means taking the partial transpose on the first $i$ full-copy registers and the first $j$ marginal registers. One such set of representatives is $\{(i,j):0\leq i\leq k,\ 0\leq j<\tfrac{m-k}{2}\}\ \cup\ \{(i,\tfrac{m-k}{2}):0\leq i\leq\lfloor k/2\rfloor\}$ where the second set is only present for even $m-k$ and the tuple $(0,0)$ is excluded.
For $k\geq\deg P$, the marginal $\Phi_k$ of any feasible $\Phi_{k,m}$ is feasible for Eq.~\eqref{eq:SDPcroof1} at level $k$, so that $P_{{\rm LB}_k}\leq P_{{\rm LB}_{k,m}}$ and this hierarchy is also complete in $k$. In fact, notice that for $k=m$ the hierarchy of Eq.~\eqref{eq:marginals_SDP} reduces to Eq.~\eqref{eq:SDPcroof1}.

The parameters $m$ and $k$ can be used for a tradeoff between performance and efficiency. In terms of memory usage, this hierarchy is not necessarily less costly at level $(k_1, m_1)$ with $k_1< m_1$ than it is at level $(k_2, m_2)$ with $k_2 = m_2$. This depends on the rank of the target state $\rho$ and its marginals (See App.~\ref{app:supportReduction}). Typically, level $(k_1, m_1)$ is less costly for full-rank states while $(k_2, m_2)$ is less costly for low-rank ones.

\begin{observation}
   For full-rank states, the simplified hierarchy achieves tighter bounds at the same computational cost.
\end{observation}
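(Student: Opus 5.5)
The claim compares the two hierarchies at equal cost, so I will reduce it to counting the dimension of the SDP variable and then comparing bounds at matched dimensions. For a full-rank $\rho$ no support reduction is possible (App.~\ref{app:supportReduction}). At level $k$, the variable $\Omega_k$ of Eq.~\eqref{eq:SDPcroof1} lives on $\text{Sym}^k(\C^{d_A}\otimes\C^{d_B})$, which has dimension $\binom{k+D-1}{k}$ with $D=d_Ad_B$.

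For Eq.~\eqref{eq:marginals_SDP}, invariance under permutations of $A_1B_1,\dots,A_kB_k$ and of $A_{k+1},\dots,A_m$ lets me block-diagonalize. The full-copy part is restricted to $\text{Sym}^k(\C^D)$, because the symmetric-subspace constraint on $\Phi_k$ together with positivity forces the support of $\Phi_{k,m}$ on the first $k$ copies into $\text{Sym}^k$. The remaining $m-k$ marginal registers carry dimension at most $d_A^{\,m-k}$, which Schur--Weyl reduction on $A_{k+1},\dots,A_m$ lowers further. The cost is therefore of order $\binom{k+D-1}{k}\, d_A^{\,m-k}$, to be compared with $\binom{m+D-1}{m}$ for level $m$ of the full hierarchy.

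The key step is to fix a budget and show that some $(k,m)$ with $k<m$ fits inside it and beats the best full level $k'$ that fits. Because $\binom{k+D-1}{k}\sim k^{D-1}$ grows much faster per added copy than the factor $d_A$ per added marginal when $D=d_Ad_B\gg d_A$, the pair $(k,m)=(k',m)$ with $m>k'$ has cost comparable to level $k'$ but reaches a larger degree $m$. For tightness I would first use the ordering already established. The marginal $\Phi_{k'}$ of any feasible $\Phi_{k',m}$ is feasible for Eq.~\eqref{eq:SDPcroof1} at level $k'$, and the trace of the $O_P$ acting on $A_1,\dots,A_{k'}$ is unchanged under the lifting of Lemma~\ref{lem:lifting}. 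Hence $P_{{\rm LB}_{k',m}}\geq P_{{\rm LB}_{k'}}$, since the extra marginals only add constraints. Together with the cost estimate, this gives a bound at least as tight at matched cost, and the extra permutation-invariance and partial-transpose constraints on the $A$-marginals typically make it strictly tighter.

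The main obstacle is that ``tighter'' cannot be proved strictly in general, and ``same cost'' depends on solver memory rather than a clean inequality. I would therefore state the weak inequality rigorously via the feasibility embedding and the dimension count. Strictness, as the label ``Observation'' suggests, would be supported numerically, for example on full-rank bound entangled families such as Horodecki states, by tabulating the bound against matrix size for both hierarchies.
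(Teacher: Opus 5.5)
There is a genuine gap in the step you call rigorous, namely matching the cost. Your own count gives level $(k',m)$ a variable of size $\binom{k'+D-1}{k'}\,d_A^{\,m-k'}$. That is a factor $d_A^{\,m-k'}$ larger than level $k'$ of Eq.~\eqref{eq:SDPcroof1}, and it also carries many more partial-transpose blocks, which the paper identifies as the most expensive constraints. So $(k',m)$ is not ``comparable'' in cost to level $k'$.

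What your growth-rate argument actually shows, by comparing $\prod_{j=k'+1}^{m}\frac{j+D-1}{j}$ with $d_A^{\,m-k'}$, is that $(k',m)$ can be cheaper than the full level $m$. Against level $m$, however, the bound goes the wrong way: $P_{{\rm LB}_{k',m}}\leq P_{{\rm LB}_m}$. Your feasibility embedding is against level $k'$ and your cost saving is against level $m$. Since they refer to different reference levels, they do not combine into ``at least as tight at the same cost''. If you instead fix a budget and compare the best level of each type that fits in it, the weak inequality is trivial, because the full levels are the case $k=m$ of Eq.~\eqref{eq:marginals_SDP}. The whole content of the observation is therefore that some level with $k<m$ wins strictly, and your argument does not reach that.

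The comparison the observation is about is between a higher-degree simplified level and a full level of similar cost, e.g.\ $(2,4)$ versus $(3,3)$. These relaxations are not nested: neither feasible set projects into the other, so no embedding gives even a weak inequality between them. Accordingly, the paper gives no proof of the observation.

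The paper first explains the rank dependence through the support reduction of App.~\ref{app:supportReduction}. Full level $m$ has size $\binom{m+r-1}{m}$, whereas $(k,m)$ has size $\binom{k+r-1}{k}\,r_A^{\,m-k}$. Extra marginal registers are therefore cheap exactly when $r=d_Ad_B$ is large compared with $r_A=d_A$, i.e.\ for full-rank states. Your dimension count reproduces this part correctly. The paper then rests ``tighter at the same cost'' on the measured memory in Fig.~\ref{fig:memory} and on the numerical bounds; for instance, the $E_F$ bounds in Fig.~\ref{fig:EoFmain} are taken at level $(2,4)$.

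Note also where the practical gain comes from. At level $(k,m)$ the LP can use degree-$m$ polynomials, which level $k'$ cannot express at all, whereas your embedding only covers $\deg P\leq k'$. A sound version of your proposal would keep the dimension count, drop the claim of a rigorous inequality at matched cost, and rest the observation on numerics.
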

We show the memory cost of the hierarchy at different levels for states of different ranks in Fig.~\ref{fig:memory} in the Appendix.

\section{Approximating the optimal decomposition}\label{sec:extractingTheDecomposition}
Our next goal is to extract the decomposition $\{p_i, \ket{\psi_i}\}$ from the optimal solution $\Omega_k$ found in the SDP. This will provide us control of how tight our estimates are. Notice that if the relaxation is tight, we have $\Omega_k = \omega_k$. This is in general not the case, and therefore we write $\Omega_k = \omega_k + \Delta$, where $\Delta$ accounts for the gap of the relaxation and solver noise.
For $k \geq 3$, this decomposition is unique up to phases when the operators $\dyad{\psi_i}$ are linearly independent~\cite{Kruskal1977}. When the relaxation is tight and $\Omega_k = \omega_k$, we can then retrieve it exactly with Jennrich's algorithm~\cite{Harshman1970,Leurgans1993}. Since $\Omega_k$ is in general not of this form, we instead use the PARAFAC (or CANDECOMP) algorithm~\cite{Harshman1970,CarrollChang1970} from the TensorLy package~\cite{tensorly}, initialized with Jennrich's algorithm, to find a decomposition $\{p_j, \ket{\psi_j}\}$ that locally minimizes the Frobenius norm
\begin{equation}
    \Theta_k:=\Big\|\Omega_k - \sum_j p_j \dyad{\psi_j}^{\otimes k}\Big\|_F,
\end{equation}
with $\|X\|_F = \sqrt{\sum_{i,j} |X_{ij}|^2}$. Details are given in Appendix~\ref{app:tensor}.
We obtain a decomposition of the state $\sigma = \sum_j p_j \dyad{\psi_j}$, which is close to, but generally different from $\rho$. We turn it into an exact decomposition of $\rho$ by defining
\begin{equation}\label{eq:repair}
    \ket{\varphi_j} = \rho^{1/2}\sigma^{-1/2}\sqrt{p_j}\ket{\psi_j},
\end{equation}
where $\sigma^{-1/2}$ is taken on the support of $\sigma$. If $\operatorname{supp} \rho \subseteq \operatorname{supp} \sigma$, then $\sum_j \dyad{\varphi_j} = \rho^{1/2}\Pi_\sigma\rho^{1/2} = \rho$, with $\Pi_\sigma$ the projector onto the support of $\sigma$. Hence $\{q_j,\ket{\phi_j}\}$ with $q_j = \|\varphi_j\|^2$ and $\ket{\phi_j}=\ket{\varphi_j}/\|\varphi_j\|$ is a decomposition of $\rho$ and gives a valid upper bound on the convex roof.

As a comparison, we also compute upper bounds by gradient descent over ensemble decompositions, which can be unitarily parametrized. This result was first proven by Schrödinger \cite{schrodinger36}, but regularly rediscovered~\cite{jaynes57, gisin89, HJW,  kirkpatrick06} (App.~\ref{app:HJW}). Both upper bounds are comparable for the linear entropy and the Vidal monotone $E_1$, while gradient descent gives tighter upper bounds for the entanglement of formation (Figs.~\ref{fig:EoF}, \ref{fig:namedLinearEntropy} and~\ref{fig:Vidal}). However, the tensor extraction method remains useful in cases where an upper bound cannot be extracted otherwise. The main result concerning optimal decompositions can be summarized as follows.

\begin{result}\label{thm:extraction}
    Let $P$ be a symmetric polynomial in pure-state Schmidt coefficients.
    It is possible to extract the decomposition of $\Omega_k$ optimizing $P$. By doing so, we obtain an upper bound $P_{{\rm UB}_k}(\rho)\geq \widehat{P}(\rho)$ which converges, $\lim_{k\rightarrow\infty}P_{{\rm UB}_k}(\rho)=\widehat{P}(\rho)$.
\end{result}

The proof is shown in App.~\ref{app:extractionConv}, and for it we use the rank reduction step (App.~\ref{app:tensor}), an instance of Jennrich's algorithm~\cite{Harshman1970,Leurgans1993} and the repair step from Eq.~\eqref{eq:repair}. In practice, we cannot run the SDP at $k$ large enough to use Jennrich's algorithm and the extracted decomposition is always a heuristic one from PARAFAC.

We show the bounds on the convex roof of the linear entropy $S_L(\psi)=1-\tr\rho_A^2$ in Fig.~\ref{fig:linearEntropy}.

% Shared colors and styles of the figures. Every figure reads its data from figures/data/*.csv, written by
% experiments/make_figures.py (comma separated, header row, nan for missing values: pgfplots skips those points).
\definecolor{cA}{HTML}{2a78d6}\definecolor{cB}{HTML}{eb6834}\definecolor{cC}{HTML}{1baf7a}%
\definecolor{cD}{HTML}{4a3aa7}\definecolor{cE}{HTML}{c03b7a}\definecolor{cF}{HTML}{8a6d1f}\definecolor{cGray}{HTML}{8a8984}%
\definecolor{lvA}{HTML}{c6dbf5}\definecolor{lvB}{HTML}{93bbec}\definecolor{lvC}{HTML}{5e98e0}%
\definecolor{lvD}{HTML}{2a78d6}\definecolor{lvE}{HTML}{174a8c}\definecolor{lvF}{HTML}{1baf7a}%
%
% \namedbars{csv file}{ylabel}{title}: lower bounds of the named states at every level, divided by the reference `ref`
% (printed above each state at height `label_y`), with the two upper bounds as markers. Columns: see named_csv in
% make_figures.py. The state positions (`pos`) leave a gap at x = 10 between the PPT and the NPT states.
\providecommand{\namedbar}[3]{% fill, column, offset
  \addplot[ybar, bar width=0.13, bar shift=0pt, draw=none, fill=#1, area legend]
    table[x expr=\thisrow{pos}+#3, y expr=\thisrow{#2}/\thisrow{ref}] {\namedfile};}
\providecommand{\namedbars}[3]{%
  \def\namedfile{#1}%
  \begin{tikzpicture}
    \begin{axis}[width=\textwidth, height=0.29\textwidth, ymin=0, ymax=1.38, xmin=0.3, xmax=14.7,
      ylabel={#2}, ytick={0,0.25,0.5,0.75,1}, xtick=data, xticklabels from table={#1}{state},
      table/col sep=comma, xticklabel style={rotate=35, anchor=north east, font=\scriptsize, inner sep=1pt},
      tick label style={font=\footnotesize}, label style={font=\small}, title={#3}, title style={font=\small,
      yshift=8pt}, clip=false, legend style={font=\scriptsize, at={(0.5,-0.42)}, anchor=north, legend columns=4,
      draw=none, /tikz/every even column/.append style={column sep=6pt}}, legend cell align=left,
      ytick style={draw=none}, xtick style={draw=none}, ymajorgrids, grid style={cGray!25}]
      \addplot[draw=none, forget plot] table[x=pos, y expr=0] {#1};% the ticks (xtick=data)
      \namedbar{lvA}{lb12}{-0.325}\addlegendentry{LB $(1,2)$}
      \namedbar{lvB}{lb13}{-0.195}\addlegendentry{LB $(1,3)$}
      \namedbar{lvC}{lb22}{-0.065}\addlegendentry{LB $(2,2)$}
      \namedbar{lvD}{lb23}{0.065}\addlegendentry{LB $(2,3)$}
      \namedbar{lvE}{lb33}{0.195}\addlegendentry{LB $(3,3)$}
      \namedbar{lvF}{lb24}{0.325}\addlegendentry{LB $(2,4)$}
      \addplot[only marks, mark=diamond*, mark size=2.2pt, cB, mark options={draw=white, line width=0.4pt}]
        table[x expr=\thisrow{pos}-0.14, y expr=\thisrow{ub_extract}/\thisrow{ref}] {#1};
      \addlegendentry{UB: extracted ($k=m=3$)}
      \addplot[only marks, mark=x, mark size=2.4pt, black, thick]
        table[x expr=\thisrow{pos}+0.14, y expr=\thisrow{ub_hjw}/\thisrow{ref}] {#1};
      \addlegendentry{UB: HJW}
      \addplot[only marks, mark=none, nodes near coords={$\pgfplotspointmeta$}, point meta=explicit symbolic,
        nodes near coords style={anchor=south, font=\tiny, inner sep=1pt}, forget plot]
        table[x=pos, y=label_y, meta=ref_label] {#1};
      \draw[cGray] (axis cs:0.3,1) -- (axis cs:14.7,1);
      \draw[cGray, densely dashed] (axis cs:10,0) -- (axis cs:10,1.38);
      \node[font=\scriptsize, anchor=south] at (axis cs:5,1.38) {PPT (bound entangled)};
      \node[font=\scriptsize, anchor=south] at (axis cs:12.5,1.38) {NPT};
    \end{axis}
  \end{tikzpicture}}
\begin{figure}
  \centering
  \begin{tikzpicture}
    \begin{axis}[width=\columnwidth, height=0.72\columnwidth, xlabel={$a$}, ylabel={bounds on $\widehat{S_L}$},
      ymin=0, ymax=4.2e-3, scaled y ticks=base 10:3, tick label style={font=\footnotesize}, label style={font=\small},
      legend style={font=\scriptsize, at={(0.5,-0.3)}, anchor=north, draw=none, legend columns=3,
      /tikz/every even column/.append style={column sep=4pt}}, legend cell align=left, table/col sep=comma]
      \def\file{figures/data/linear_entropy_horodecki24.csv}
      \addplot[lvA, thick] table[x=a, y=lb12] {\file};\addlegendentry{LB $(1,2)$}
      \addplot[lvB, thick] table[x=a, y=lb13] {\file};\addlegendentry{LB $(1,3)$}
      \addplot[lvC, thick] table[x=a, y=lb22] {\file};\addlegendentry{LB $(2,2)$}
      \addplot[lvD, thick] table[x=a, y=lb23] {\file};\addlegendentry{LB $(2,3)$}
      \addplot[lvE, thick, densely dashed] table[x=a, y=lb33] {\file};\addlegendentry{LB $(3,3)$}
      \addplot[lvF, very thick, dotted] table[x=a, y=lb24] {\file};\addlegendentry{LB $(2,4)$}
      \addplot[only marks, mark=diamond, mark size=3pt, cB, thick] table[x=a, y=ub_extract] {\file};
      \addlegendentry{UB: extracted, $k=m=3$}
      \addplot[only marks, mark=x, black, thick, mark size=1.8pt] table[x=a, y=ub_hjw] {\file};\addlegendentry{UB: HJW}
    \end{axis}
  \end{tikzpicture}
  \caption{Convex roof of the linear entropy $S_L=1-\tr\rho_A^2$ for the Horodecki $2\times4$ family. Lower bounds are shown at different levels of the hierarchy. The markers show the upper bounds extracted from our method and HJW gradient descent.}
  \label{fig:linearEntropy}
\end{figure}

\section{Non-polynomial measures via Linear Programming}\label{sec:LP}

So far we have introduced a method to estimate polynomial entanglement measures, by using a natural basis of symmetric polynomials in the entanglement spectrum. Here we aim to use this approach to estimate non-polynomial entanglement measures. To this end notice that all continuous entanglement measures can be bounded arbitrarily well by infinite series of polynomials. Our goal is therefore to find the polynomial that best approximates the target measure. Namely, we aim to find
\begin{equation}\label{eq:FLBexact}
    \widetilde{F}_{{\rm LB}_k}(\rho)=\sup_{\substack{P\leq F\\ \deg P\leq k}}\ \min_{\Omega_k}\ \tr\left((O_P^A\otimes\I^B)\,\Omega_k\right),
\end{equation}
the optimization over symmetric polynomials of degree at most $k$ that lower bound $F$ on all spectra.

Using that all such polynomials can be written in the (discrete) basis of Schur polynomials, we will show that finding the best polynomial that lower bounds any entanglement measure is a problem that can be cast as a linear program (LP) over the coefficients in this basis. This allows us to dynamically find the best lower-bounding polynomial for a given state while running the SDP.

The proof is constructive, namely we will present how to construct such LP. First, we select multiple spectra $\a=(\a_1, \dots, \a_{d_A})$ at which we compute all the Schur polynomials $s_\lambda(\a)$ of degree $k$ and the input function $F(\a)$. These are points of the simplex of possible spectra $\alpha$, and define a grid $G=\{\a^g\}$ discretizing this simplex of possible spectra (Fig.~\ref{fig:grid}). Here the superindex $\a^g$ is chosen to distinguish the grid point $g$ from the subindex $j$ denoting each corresponding eigenvalue $\a^g_j$. We then require $F(\a^g) \geq \sum_\lambda c_\lambda s_\lambda(\a^g)$ at every spectrum $\a^g\in G$. By symmetry of the Schur polynomials, we only need to compute the points where $\a_i \geq \a_j$ for $i < j$ for each spectrum $\a$.
Thus, we are left with the following optimization problem,
\begin{equation}\label{eq:FLBgrid}
    F_{{\rm LB}_k}(\rho)=\max_{\substack{P(\a^g)\leq F(\a^g)\,\forall g\\ \deg P\leq k}}\ \min_{\Omega_k}\ \tr\left((O_P^A\otimes\I^B)\,\Omega_k\right).
\end{equation}

In order to solve this optimization, we need to address two challenges. First, a certain error $\epsilon_k$ is inherent to this grid procedure, in the sense that $|F_{{\rm LB}_k}-\widetilde{F}_{{\rm LB}_k}|\leq \epsilon_k$ (Fig.~\ref{fig:grid}).
Since we can retrieve the polynomial as output from the SDP, we can compute this error numerically and subtract it from the output to obtain a lower bound, namely $\widetilde{F}_{{\rm LB}_k}\geq F_{{\rm LB}_k} - \epsilon_k$. Second, this is in principle a min-max optimization problem. However, since the LP is feasible and has a bounded solution, strong duality holds, and we can turn this problem into a single minimization problem by taking the dual of the LP. Details are given in App.~\ref{app:LP}.

\begin{result}\label{thm:non-polyHier} We obtain a complete hierarchy to estimate any entanglement measure on mixed states, in the following sense:
\begin{enumerate}
    \item Let $F$ be a continuous symmetric function of the entanglement spectrum, and let $\widehat{F}(\rho)$ be its convex roof~\eqref{eq:CRoof}. Then $\widetilde F_{{\rm LB}_k}(\rho)\leq\widehat F(\rho)$ at all levels $k$.
    \item The bound is non-decreasing in $k$ and convergent, $\lim_{k\to\infty}\widetilde F_{{\rm LB}_k}(\rho)=\widehat F(\rho)$.
    \item With a finite grid $G$ of spectra $\alpha^g\in G$, $F_{{\rm LB}_k}(\rho)-\epsilon\leq\widehat F(\rho)$, where $\epsilon=\max_{\a^g}\left(P^*(\a^g)-F(\a^g)\right)$ is the grid error of the polynomial $P^*$ returned by the solver.
\end{enumerate}
\end{result}
\begin{figure}
  \centering
  \begin{tikzpicture}
    \def\file{figures/data/named_EF_main.csv}
    \begin{axis}[width=\columnwidth, height=0.62\columnwidth, ymin=0, ymax=1.16, xmin=0.4, xmax=8.6,
      ylabel={$E_{F,\mathrm{LB}}\,/\,\mathrm{UB}$}, ytick={0,0.25,0.5,0.75,1}, xtick=data,
      xticklabels from table={\file}{state}, table/col sep=comma,
      xticklabel style={rotate=35, anchor=north east, font=\scriptsize, inner sep=1pt},
      tick label style={font=\footnotesize}, label style={font=\small}, clip=false,
      ytick style={draw=none}, xtick style={draw=none}, ymajorgrids, grid style={cGray!25}]
      \addplot[ybar, bar width=0.55, bar shift=0pt, draw=none, fill=lvF]
        table[x=pos, y expr=\thisrow{lb_best}/\thisrow{ref}] {\file};
      \addplot[only marks, mark=none, nodes near coords={$\pgfplotspointmeta$}, point meta=explicit symbolic,
        nodes near coords style={anchor=south, font=\tiny, inner sep=1pt}]
        table[x=pos, y expr=1, meta=ref_label] {\file};
      \draw[cGray] (axis cs:0.4,1) -- (axis cs:8.6,1);
      \draw[cGray, densely dashed] (axis cs:6,0) -- (axis cs:6,1.16);
      \node[font=\scriptsize, anchor=south] at (axis cs:3,1.16) {PPT (bound entangled)};
      \node[font=\scriptsize, anchor=south] at (axis cs:7.5,1.16) {NPT};
    \end{axis}
  \end{tikzpicture}
  \caption{Best lower bound on the entanglement of formation $E_F$ for different PPT and NPT states, from the SDP with embedded LP at level $(k,m)=(2,4)$, divided by the best upper bound UB printed above each state. For two qubits, UB is the exact convex roof (marked $^*$). All levels of the hierarchy and both upper bounds, for more states, are shown in Fig.~\ref{fig:EoF}.}
  \label{fig:EoFmain}
\end{figure}

The proofs of all statements above are given in Appendix~\ref{app:LPconvergence}. We apply this method to estimate the convex roof of the von Neumann entropy of entanglement,
\begin{equation}\label{eq:vonNeumann}
    S(\ket{\psi_{AB}}) = -\tr(\rho_A \log_2 \rho_A),
\end{equation}
where $\rho_A = \tr_B \dyad{\psi_{AB}}$, extended to mixed states via convex roof, which we denote as $E_F=\widehat{S}$. This is known as the entanglement of formation~\cite{Bennett1996EoF}, and it is well known to be difficult to compute~\cite{Chen2005EoF}. To benchmark the method introduced, we use it to estimate $\widehat{S}(\rho_{AB})$ for several families of mixed states $\rho_{AB}$ of different local dimensions, including the main known bound entangled states. The results are given in Fig.~\ref{fig:EoFmain}, and in more detail in Fig.~\ref{fig:EoF} in the Appendix.

\begin{figure}[t]
    \centering
    \includegraphics{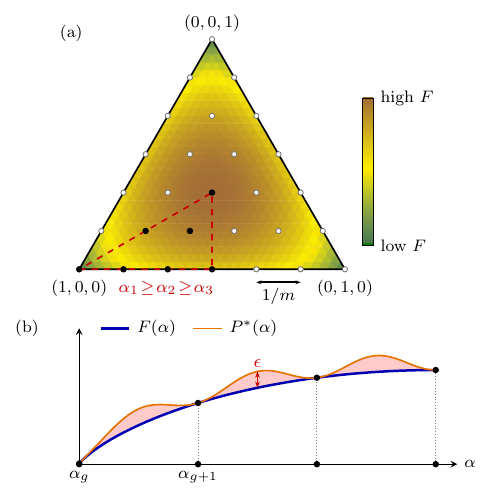}
    \caption{Grid discretization of the simplex of spectra for $d_A=3$.
    (a) Different values of $F(\alpha)$ are denoted by different colors. The dots denote the grid $G$, separated by a certain distance $1/m$. Since entanglement measures are symmetric functions on the spectrum, only the points with $\alpha_1\geq\alpha_2\geq\alpha_3$ (dashed region, filled dots) need to be imposed.
    (b) Cut along a one-parameter family of spectra. Although $P^*\leq F$ holds at all grid points, it is possible that $P^*>F$ between any two points by a value $\epsilon$. This value decreases as we refine the grid.}
    \label{fig:grid}
\end{figure}

%
% Classes (column `class` of locc.csv): 0 psi->rho certified (Theorem LOCC (ii)), 1 psi->rho excluded (Theorem LOCC (i)),
% 2 inconclusive, 3 incomparable (only if SHOW_REVERSE in make_figures.py). 31 x 31 grid, rows ordered by p then theta.
\begin{figure}[tbp]
  \centering
  \begin{tikzpicture}
    \begin{axis}[width=0.92\columnwidth, height=0.8\columnwidth, xlabel={$\theta/\pi$}, ylabel={$p$},
      xmin=-0.00416668, xmax=0.254167,
      ymin=-0.0166665, ymax=1.01667, axis on top,
      colormap={locc}{color(0)=(cA!55) color(1)=(cB!60) color(2)=(cGray!45) color(3)=(cC!60)},
      colormap access=direct,
      tick label style={font=\footnotesize}, label style={font=\small}, name=ax]
      \addplot[matrix plot*, mesh/cols=31, point meta=explicit, forget plot]
        table[col sep=comma, x=theta_over_pi, y=p, meta=class] {figures/data/locc.csv};
    \end{axis}
    \node[anchor=south, font=\scriptsize, inner sep=1pt] at ([yshift=3pt]ax.north) {%
      \begin{tabular}{@{}l@{\hspace{1.2em}}l@{}}
      \tikz\fill[cA!55] (0,0) rectangle (2mm,2mm); $\psi\to\rho$ (decomposition) &
      \tikz\fill[cB!60] (0,0) rectangle (2mm,2mm); $\psi\nrightarrow\rho$ (SDP) \\
      \tikz\fill[cGray!45] (0,0) rectangle (2mm,2mm); inconclusive
      \end{tabular}};
  \end{tikzpicture}
  \caption{LOCC convertibility from the pure state $\ket{\psi}$ with entanglement spectrum
  $(0.6,0.3,0.1)$ to the two-qutrit states
  $\rho(p,\theta)=p\dyad{\Phi_3}+(1-p)\dyad{\chi_\theta}$, with $\ket{\Phi_3}$ maximally entangled and
  $\ket{\chi_\theta}=\cos\theta\ket{00}+\sin\theta\ket{11}$, from the SDP of Eq.~\eqref{eq:LOCCminmaxt} at level $(k,m)=(3,4)$.}
  \label{fig:locc}
\end{figure}

\section{LOCC convertibility}
Finally, let us demonstrate how the presented approach allows us to determine whether certain pairs of states can be transformed by means of LOCC. To date, and to the best of our knowledge, the only criteria known for this problem are restricted to qubits and very special families of states~\cite{Chitambar2014LOCC,Gour2005InfiniteConditions}.

To this end, let us first recall the following criterion~\cite{Jonathan_1999}: a pure state $\ket{\psi}$ can be converted into an ensemble $\{p_i, \ket{\psi_i}\}$ by LOCC, if and only if
\begin{equation}\label{eq:JPcriterion}
    E_j(\psi) \geq \sum_i p_i E_j(\psi_i)\qquad \text{for all } j=1,\dots,d_A-1,
\end{equation}
where $E_j$ are the Vidal monotones~\eqref{eq:Vidal}.
Hence, $\ket{\psi}$ can be converted into a mixed state $\rho$ if and only if some decomposition of $\rho$ satisfies all these inequalities simultaneously.

Notice that this is not straightforward to verify, since the decompositions of $\rho$ maximizing each monotone $E_j$ might be different. Therefore, to assure that all inequalities hold for the same decomposition (even if not optimal), we combine the LPs of Sec.~\ref{sec:LP} for all the $E_j$ into a single SDP. With a grid of spectra $\{\a^g\}_g$, we define the following optimization:
\begin{align}\label{eq:LOCCminmaxt}
    s_k=\min_{\Omega_k} \max_t \max_{P_j\in\mathcal{P}_j}\ \sum_j t_j\Big(\tr\left((O_{P_j}\otimes\I^B)\Omega_k\right)-E_j(\psi)\Big),
\end{align}
where $t$ is a probability vector and $\mathcal{P}_j$ is the set of polynomials of degree $k$ satisfying the grid constraint.

The optimal value $s_k$ is the smallest worst-case violation of Eq.~\eqref{eq:JPcriterion} over the set of relaxations of possible decompositions. The dual variables of this SDP provide weights $t_j^*\geq0$ with $\sum_jt^*_j=1$ and polynomials $Q_j^*$ satisfying $Q^*_j\leq t^*_jE_j$ on the grid (Appendix~\ref{app:LOCC}).
As shown in Sec.~\ref{sec:extractingTheDecomposition}, we can extract the decomposition from the result of the SDP. This leads us to a certificate that certain pure-to-mixed LOCC transformations are possible via Eq.~\eqref{eq:JPcriterion}. This can be summarized as follows.

\begin{result}
By extracting a decomposition $\{p_i, \ket{\psi_i}\}$, we can verify whether it satisfies Eq.~\eqref{eq:JPcriterion} for a certain input state $\ket{\psi}$, and therefore determine whether they can be transformed with LOCC.
\end{result}
However, unlike for the bounds for polynomial and non-polynomial measures, it may not always be possible to extract the correct decomposition. This comes from the fact that there may not exist a decomposition of $\rho$ into linearly independent terms that satisfies the criterion of Eq.~\eqref{eq:JPcriterion}, which makes it very difficult for the algorithms of Sec.~\ref{sec:extractingTheDecomposition} to find the underlying decomposition.
Instead, a similar strategy allows us to obtain a complete no-go criterion to discard entanglement transformations.

\begin{result}\label{thm:LOCC}
We obtain a complete hierarchy for LOCC non-convertibility in the following sense:
    \begin{enumerate}
    \item Let $\epsilon_j=\max_{\a^g}\left(Q^*_j(\a^g)-t^*_jE_j(\a^g)\right)$ be the grid errors of the polynomials returned by the solver.
    If $s_k>\sum_j\epsilon_j$ for some $k$, then $\ket{\psi}$ cannot be converted into $\rho$ by LOCC.
    \item In the limit where the grid $G$ is continuous and covers all possible spectra $\alpha$, for any $\ket{\psi}$ and $\rho$, if $\ket{\psi} \overset{\text{LOCC}}{\nrightarrow} \rho$, there is a finite level $k$ at which $s_k > 0$.
    \end{enumerate}
\end{result}
The proof is given in Appendix~\ref{app:LOCC}. Notice that although the grid will always be discrete in practical computations, by continuity the second item suggests that finer grids lead to tighter criteria. Numerically, we verify that this is the case for all states tested.

\section{Conclusions}

Using spectral estimation techniques and semidefinite programming (SDP), we have introduced a method to estimate any entanglement measure for bipartite mixed states.
This method also allows us to address the problem of determining whether entanglement transformations are possible from pure to mixed states, with a hierarchy of certificates of non-convertibility and a heuristic approach for convertibility.
This addresses a major barrier at certifying  quantum resources and their convertibility, and demonstrates the applicability of Schur-Weyl duality in this area.

For future research, it would be central to extend our approach to the multipartite case, where not all entanglement monotones are determined by the spectrum of the
reduced states. The classification of convertibility between multipartite states
is extremely difficult~\cite{Sauerwein_2018}, but at the same time highly relevant.
The reason is that relevant quantum phenomena like magic (for quantum computation) or the Fisher information (for quantum metrology) are inherently linked to multipartite systems. Extending our approach to these scenarios could be an essential step towards a full understanding of quantum resources.

\bigskip

{\em AI statement}.
The ideas, methodology,
proof strategies and the project scope
were designed by the authors.
The numerical results, plots and proof calculations were assisted with AI (Claude Opus 4.8-5.5), and validated by the authors. The article was entirely written by the authors, with minor refinements with AI to correct spelling and condense some of the sentences.

{\em Acknowledgements.} We thank Sophia Denker, Kenneth Goodenough, Ladina Hausmann, Renato
Renner and Karol \.Zyczkowski for discussions.
This research was funded by the Deutsche Forschungsgemeinschaft (DFG, German Research Foundation, project number 563437167), the Sino-German Center for Research Promotion (Project M-0294), and the German Federal Ministry of Research, Technology and Space (Project BeRyQC, Grant No.\ 13N17292).

\newpage

\bibliographystyle{ieeetr}
\bibliography{Bibliography}

\begin{widetext}

\appendix

\section{Proof of Result~\ref{thm:ConvergenceSDPcroof}}\label{app:ProofCongNonExtHier}
We consider a polynomial $P$ of degree $m$. At level $k \geq m$ of the hierarchy, we obtain the optimal solution $\Omega_k$, with $P_{{\rm LB}_k}(\rho) = \tr(O_P\, \Omega_m)$ where $\Omega_m = \tr_{m+1,\dots, k}(\Omega_k)$ is the $m$-copy marginal of $\Omega_k$.
By the de Finetti theorem for Bose-symmetric states~\cite[Thm.~II.8]{Christandl_2007}, for every feasible $\Omega_k$, there is an $\omega_m = \int \dyad{\phi}^{\otimes m} d\mu(\phi)$ such that
\begin{equation}
    \left\|\Omega_m - \omega_m\right\|_1 \leq \epsilon
\end{equation}
with $\epsilon = \tfrac{4 D m}{k}$. By Carath\'eodory's theorem, since the space is finite-dimensional, we can write $\omega_m = \sum_i p_i \dyad{\phi_i}^{\otimes m}$. Here, $\omega_m$ is associated to a decomposition $\{p_i, \ket{\phi_i}\}_i$ corresponding to a state $\sigma$ and $\tr(O_P \,\omega_m) \geq \widehat{P}(\sigma)$. Since the partial trace is contractive, $\left\|\rho - \sigma\right\|_1 = \left\|\tr_{2,\dots, m}(\Omega_m - \omega_m)\right\|_1 \leq \epsilon$.

Then, by H\"{o}lder inequality \eqref{eq:HolderStep}, by definition of the convex roof \eqref{eq:croofstep} and from the continuity of the convex roof of a continuous function in finite dimensions \eqref{eq:contConv}~\cite[Thm.~6.1]{Bucicovschi2013},
\begin{align}
    \tr(O_P \Omega_m) &\geq \tr(O_P\omega_m) - \epsilon\|O_P\|_\infty \label{eq:HolderStep}\\
    &\geq \widehat{P}(\sigma) - \epsilon\|O_P\|_\infty \label{eq:croofstep}\\
    &\geq \widehat{P}(\rho) - \delta(\epsilon) - \epsilon\|O_P\|_\infty \label{eq:contConv}
\end{align}
with $\delta(\epsilon) \to 0$.

Hence, $\widehat{P}{(\rho)} \geq P_{{\rm LB}_k}(\rho) \geq \widehat{P}{(\rho)} -\delta(\epsilon) - \epsilon\|O_P\|_\infty$, and therefore $P_{{\rm LB}_k}(\rho) \to \widehat{P}(\rho)$ as $k \to \infty$.

\section{Lifting}
\begin{lemma}\label{lem:lifting}
    Let $P$ be a homogeneous symmetric polynomial of degree $m$ with $m < k$. Let $P_k$ be its lift to degree $k$ such that $P_k(\a) = P(\a) \left(\sum_i \a_i \right)^{k-m} = P(\a)$ where $\a$ is the spectrum of a state $\rho_A$.
    Then, for every $\Omega_k$ acting on $(\C^{d_A} \otimes \C^{d_B})^{\otimes k}$, which is permutation invariant across all $k$ copies of $\C^{d_A} \otimes \C^{d_B}$,
    \begin{equation}
        \tr(O_P^{A_1,\dots, A_m} \otimes \I^{A_{m+1}, \dots, A_k} \otimes \I^{B} \Omega_k) = \tr(O_{P_k}^A \otimes \I^B \Omega_k).
    \end{equation}
\end{lemma}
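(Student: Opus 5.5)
The plan is to show that the two operators on the $A$ registers, $X:=O_P^{A_1,\dots,A_m}\otimes\I^{A_{m+1},\dots,A_k}$ and $Y:=O_{P_k}^A$, become \emph{equal} once $X$ is averaged over permutations of the copies. Permutation invariance of $\Omega_k$ then turns this operator identity into the claimed equality of traces.

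\textbf{Step 1 (twirling).} Define the twirl $\mathcal{T}(Z)=\frac{1}{k!}\sum_{\pi\in S_k}\pi_A Z\pi_A^{\dagger}$ on $(\C^{d_A})^{\otimes k}$. Since $\Omega_k$ commutes with $\pi_A\otimes\pi_B$ for every $\pi\in S_k$, we have $(\pi_A\otimes\pi_B)(Z\otimes\I^B)(\pi_A\otimes\pi_B)^\dagger=\pi_A Z\pi_A^\dagger\otimes\I^B$. Cyclicity of the trace then gives
\begin{equation}
\tr\!\left((Z\otimes\I^B)\,\Omega_k\right)=\tr\!\left((\mathcal{T}(Z)\otimes\I^B)\,\Omega_k\right).
\end{equation}
It therefore suffices to prove $\mathcal{T}(X)=Y$. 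Note that $Y$ is a combination of isotypic projectors, which commute with all $\pi_A$, so $\mathcal{T}(Y)=Y$.

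\textbf{Step 2 (both sides are central).} The operator $O_P$ is a linear combination of the permutation operators of $S_m$. Tensoring with identity embeds these as permutations of $S_k$ fixing $m+1,\dots,k$. Hence $X$ lies in the image of the group algebra $\C[S_k]$, and $\mathcal{T}(X)$ lies in the image of its center. By Schur–Weyl duality, that image is spanned by the isotypic projectors $\Pi^\lambda$ with $\lambda\vdash k$ and $\ell(\lambda)\le d_A$; the projectors with $\ell(\lambda)>d_A$ vanish on $(\C^{d_A})^{\otimes k}$. So we can write $\mathcal{T}(X)=\sum_\lambda z_\lambda\Pi^\lambda$ and $Y=\sum_\lambda y_\lambda\Pi^\lambda$.

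\textbf{Step 3 (identify coefficients via the spectrum).} Test both operators against $\sigma^{\otimes k}$ for an arbitrary positive semidefinite $\sigma$ on $\C^{d_A}$ with spectrum $\a$, not necessarily normalized.
\begin{itemize}
\item Since $\sigma^{\otimes k}$ commutes with permutations, $\tr(\mathcal{T}(X)\sigma^{\otimes k})=\tr(X\sigma^{\otimes k})=\tr(O_P\sigma^{\otimes m})(\tr\sigma)^{k-m}$.
\item Using $\tr(\Pi^\lambda\sigma^{\otimes m})=f^\lambda s_\lambda(\a)$, which holds without normalization by homogeneity, the previous expression equals $P(\a)\big(\sum_i\a_i\big)^{k-m}=P_k(\a)$.
\item Likewise $\tr(Y\sigma^{\otimes k})=P_k(\a)$.
\end{itemize}
Consequently
\begin{equation}
\sum_\lambda (z_\lambda-y_\lambda)f^\lambda s_\lambda(\a)=0
\end{equation}
for all $\a$ in the nonnegative orthant. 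A polynomial vanishing on this set with nonempty interior vanishes identically. Since the Schur polynomials with $\ell(\lambda)\le d_A$ are linearly independent, $z_\lambda=y_\lambda$ for all such $\lambda$, and hence $\mathcal{T}(X)=Y$.

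The main obstacle I expect is Step 2: justifying cleanly that the twirled operator lies in the span of the $\Pi^\lambda$, rather than merely commuting with permutations. Commuting with permutations alone would also allow non-central elements of the $GL_{d_A}$ commutant. The point to make precise is that $X$ already lies in the permutation algebra, so its twirl is central there, and the center of the image of $\C[S_k]$ is spanned by the central idempotents $\Pi^\lambda$.

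A second subtlety is the use of unnormalized $\sigma$. This is what allows the lift $P_k$ to be compared as a genuine homogeneous polynomial, so that linear independence of the Schur polynomials applies; on normalized spectra alone, $P$ and $P_k$ could not be distinguished as polynomials.
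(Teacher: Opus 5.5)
Your proof is correct and follows the paper's argument: twirl $O_P\otimes\I^{\otimes(k-m)}$ over $S_k$, note that the twirl is central in the permutation algebra and hence a combination of the $\Pi^\lambda$, match its expectation values on tensor powers with those of $O_{P_k}$ via linear independence of the Schur polynomials, and conclude using the permutation invariance of $\Omega_k$. The only difference is that the paper tests on normalized $\rho_A^{\otimes k}$. This already suffices, because both sides are homogeneous of degree $k$ and so are determined by their values on the simplex, which makes your use of unnormalized $\sigma$ harmless but unnecessary.
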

\begin{proof}
    Let $\mathcal{T} = \frac{1}{k!}\sum_{\pi \in S_k} \pi \ O_P \otimes \I^{\otimes (k-m)} \ \pi^\dagger$ be the twirl of $O_P \otimes \I^{\otimes (k-m)}$ over $S_k$. Obviously $\mathcal{T}$ commutes with every permutation in $S_k$, and, since it is a linear combination of permutations, we can write it as a sum of isotypic projectors $\mathcal{T} = \sum_{\lambda \vdash k} a_\lambda \Pi^\lambda$ for some coefficients $a_\lambda$.

    For any $\rho_A$ with spectrum $\a$, since $\rho_A^{\otimes k}$ is permutation invariant,
    \begin{equation}
        \tr(\mathcal{T}\rho_A^{\otimes k}) = \tr(\frac{1}{k!} \sum_{\pi \in S_k} \pi O_P \otimes \I \pi^\dagger \rho_A^{\otimes k})= \tr(O_P \otimes \I \rho_A^{\otimes k}).
    \end{equation}
    Thus $\mathcal{T}$ and $O_{P_k}$ have the same expectation value $P_k(\a)$ on every $\rho_A^{\otimes k}$, and since both are combinations of isotypic projectors and the Schur polynomials are linearly independent, $\mathcal{T} = O_{P_k}$.
    Finally,
    \begin{align}
        &\tr(O_P^{A_1,\dots, A_m} \otimes \I^{A_{m+1}, \dots, A_k} \otimes \I^{B} \Omega_k)\\
        =& \tr( \frac{1}{k!}\sum_{\pi \in S_k} O_P^{A_1,\dots, A_m} \otimes \I^{A_{m+1}, \dots, A_k} \otimes \I^{B} \pi^\dagger \Omega_k \pi)\\
        =& \tr( \frac{1}{k!}\sum_{\pi \in S_k}  \pi O_P^{A_1,\dots, A_m} \otimes \I^{A_{m+1}, \dots, A_k} \otimes \I^{B} \pi^\dagger \Omega_k)\\
        =& \tr( O_{P_k} \otimes \I^{B} \Omega_k)
    \end{align}
\end{proof}

\section{Monotonicity in the level}\label{app:monotonicity}
In the full SDP, at level $k+1$, we obtain the optimal solution $\Omega_{k+1}$. Let $O_k$ be the observable at level $k$ and $O_{k+1}$ be its lift to level $k+1$. Then, $P_{{\rm LB}_{k+1}} = \tr(O_{k+1} \Omega_{k+1})=\tr(O_k \otimes \I \ \Omega_{k+1})$ according to Lemma~\ref{lem:lifting}. At level $k$, $\tr_{A_{k+1}B_{k+1}}(\Omega_{k+1})$ is also in the feasible set -- PPT across copies, permutation invariant, first marginal equal to $\rho$ and PSD -- with $\tr(O_k \tr_{A_{k+1}B_{k+1}}(\Omega_{k+1})) = P_{{\rm LB}_{k+1}}$. Then, $P_{{\rm LB}_{k+1}} \geq P_{{\rm LB}_{k}}$.
We can use the same argument in $k$ and $m$ for the simplified SDP.

\section{Symmetry reductions and optimizations}\label{app:SymRed}
Throughout this appendix, a single copy of $\C^{d_A}\otimes\C^{d_B}$ is identified with $\C^D$, with orthonormal basis $\{\ket{j}\}_{j=0}^{D-1}$.

\subsection{Restriction to the symmetric subspace}
The symmetric subspace $\text{Sym}^k(\C^D)$ has an orthonormal basis indexed by occupation vectors. The occupation vector denoted $\nu(i_1,\dots,i_k)\in\N^D$ of a basis vector $\ket{i_1,\dots,i_k}\in(\C^D)^{\otimes k}$ has entries $\nu_j$ equal to the number of times $j$ appears in $(i_1,\dots,i_k)$. For example, $\nu(0,1,2,0,0,1,2,2)=(3,2,3)$ for $D=3$, $k=8$. We define the isometry $V_k:(\C^D)^{\otimes k}\to\text{Sym}^k(\C^D)$ as
\begin{equation}
    V_k = \sum_{i_1, \dots, i_k=0}^{D-1} \binom{k}{\nu(i_1, \dots, i_k)}^{-1/2}\ket{\nu(i_1, \dots, i_k)}\bra{i_1, \dots, i_k},
\end{equation}
where $\binom{k}{\nu}=k!/\prod_j\nu_j!$ is the multinomial coefficient, i.e., the number of basis vectors with occupation vector $\nu$. It has real entries and satisfies $V_kV_k^\dagger=\I_{\text{Sym}^k}$ and $V_k^\dagger V_k=\Pi_{\rm sym}^{k}$. Since every feasible $\Omega_k$ is supported on the symmetric subspace, $\Omega_k=V_k^\dagger\Omega_k^{\rm sym}V_k$ with $\Omega_k^{\rm sym}=V_k\Omega_kV_k^\dagger$, a matrix of size $\binom{k+D-1}{k}$. The objective becomes $\tr\left(V_k(O_P\otimes\I^B)V_k^\dagger\,\Omega_k^{\rm sym}\right)$.

\subsection{First marginal constraint}
Here we wish to encode the condition $\tr_{A_2B_2\dots A_kB_k}(\Omega_k) = \rho$.
We define the occupation vector $\nu_\downarrow$ such that $\ket{\nu_\downarrow} \in \text{Sym}^{k-1}(\C^D)$, and $\ket{\nu_\downarrow + e_j} \in \text{Sym}^k(\C^D)$ has one more element $\ket{j}$, i.e., $(\nu_\downarrow + e_j)_j=(\nu_\downarrow)_j + 1$. Grouping the $\binom{k-1}{\nu_\downarrow}$ tuples $(i_2,\dots,i_k)$ with the same occupation vector and using $\binom{k}{\nu_\downarrow+e_j}=\binom{k-1}{\nu_\downarrow}\frac{k}{\nu_{\downarrow,j}+1}$, we get
\begin{equation}
    \left(\tr_{2,\dots,k}\Omega_k\right)_{jj'}
    =\sum_{\nu_\downarrow}\binom{k-1}{\nu_\downarrow}\frac{\bra{\nu_\downarrow+e_j}\Omega_k^{\rm sym}\ket{\nu_\downarrow+e_{j'}}}{\sqrt{\binom{k}{\nu_\downarrow+e_j}\binom{k}{\nu_\downarrow+e_{j'}}}}
    =\frac1k\tr\!\left(\hat a_{j'}^\dagger\hat a_j\,\Omega_k^{\rm sym}\right),
\end{equation}
with the bosonic creation operators $\hat a_j^\dagger:\text{Sym}^{k-1}(\C^D)\to\text{Sym}^k(\C^D)$,
\begin{equation}
    \hat a_j^\dagger=\sum_{\nu_\downarrow}\sqrt{\nu_{\downarrow,j}+1}\,\ket{\nu_\downarrow+e_j}\bra{\nu_\downarrow}.
\end{equation}
Thus, the marginal constraint becomes the linear constraint $\frac1k\sum_{j,j'}\tr\left(\hat a_{j'}^\dagger\hat a_j\,\Omega_k^{\rm sym}\right)\ket{j}\bra{j'}=\rho$.

\subsection{Partial-transpose constraints}
Thanks to the permutation symmetry, the partial transpose only depends on the size of the set of transposed copies, and since $\Omega_k^{T_{S^c}}=(\Omega_k^{T_S})^T$, a set and its complement give equivalent constraints. We therefore impose the PPT constraints on the first $l$ copies for $l \in \{1, \dots, \lfloor \tfrac{k}{2} \rfloor\}$. We define the isometry
\begin{equation}
    W_l=(V_l\otimes V_{k-l})V_k^\dagger:\ \text{Sym}^k(\C^D)\to\text{Sym}^l(\C^D)\otimes\text{Sym}^{k-l}(\C^D),
\end{equation}
so that $\Omega_k=(V_l\otimes V_{k-l})^\dagger\,W_l\Omega_k^{\rm sym}W_l^\dagger\,(V_l\otimes V_{k-l})$. Writing $T_l$ also for the transpose of the factor $\text{Sym}^l(\C^D)$ in the occupation basis, the partial transpose becomes
\begin{align}
    \Omega_k^{T_l} &= \left((V_l\otimes V_{k-l})^\dagger\,W_l\Omega_k^{\rm sym}W_l^\dagger\,(V_l\otimes V_{k-l})\right)^{T_l}\\
    &=\left(V_l^T\otimes V_{k-l}^\dagger\right)\,\left(W_l\Omega_k^{\rm sym}W_l^\dagger\right)^{T_l}\,\left(V_l^*\otimes V_{k-l}\right)\\
    &= (V_l\otimes V_{k-l})^\dagger\,\left(W_l\Omega_k^{\rm sym}W_l^\dagger\right)^{T_l}\,(V_l\otimes V_{k-l}),
\end{align}
since $V_l$ is real, so that $V_l^T=V_l^\dagger$ and $V_l^*=V_l$. Conjugating back with $V_l\otimes V_{k-l}$, the condition becomes
\begin{equation}
    \Omega_k^{T_l}\succeq0\iff\left(W_l\,\Omega_k^{\rm sym}W_l^\dagger\right)^{T_l}\succeq0,\qquad 1\leq l\leq\lfloor k/2\rfloor.
\end{equation}
This is the most expensive constraint of the SDP, since it encodes the state in the larger space $\text{Sym}^l(\C^D)\otimes\text{Sym}^{k-l}(\C^D)$ to support the partial transpose. While these constraints are not necessary for the hierarchy to be complete, they help convergence at finite $k$, as shown in Ref.~\cite{Navascues2009}.

\subsection{Real variables}
If $\rho$ is real, the SDP can be restricted to real symmetric matrices. If $\Omega_k$ is feasible, so is its entrywise complex conjugate $\Omega_k^*$, since conjugation commutes with partial traces, partial transposes and the real projector $\Pi_{\rm sym}^k$, and $\rho^*=\rho$. Since the isotypic projectors are linear combinations of permutation matrices with real characters, $O_P$ is real, and
\begin{equation}
    \tr\left((O_P^A \otimes \I^B)\, \Omega_k\right) = \tr\left((O_P^A \otimes \I^B)\, \Omega_k^*\right) = \tr\Big((O_P^A \otimes \I^B)\, \tfrac{\Omega_k + \Omega_k^*}{2}\Big).
\end{equation}
By convexity, $\frac12(\Omega_k+\Omega_k^*)$ is feasible, real and has the same objective value, so the optimum is attained by a real matrix. For the true $\omega_k$, this real matrix is $\frac{1}{2}\sum_i p_i \left(\dyad{\psi_i}^{\otimes k} + \dyad{\psi_i^*}^{\otimes k}\right)$, which corresponds to the decomposition $\{\frac{p_i}{2},\ket{\psi_i}\}\cup\{\frac{p_i}{2},\ket{\psi_i^*}\}$ of $\rho$. The same argument applies to $\Phi_{k,m}$ in the simplified relaxation.

\subsection{Projection onto the support of $\rho$}\label{app:supportReduction}
Let $r = \operatorname{rank} \rho$, and let $V_\rho: \C^D \to \C^r$ be the isometry onto the support of $\rho$, given by the eigenvectors of $\rho$ with nonzero eigenvalue, so that $V_\rho V_\rho^\dagger=\I_r$ and $V_\rho^\dagger V_\rho=\Pi_\rho$, the projector onto the support of $\rho$. Every feasible $\Omega_k$ is supported on $(\operatorname{supp}\rho)^{\otimes k}$: since
\begin{equation}
    \tr\left(((\I-\Pi_\rho)\otimes\I)\,\Omega_k\right)=\tr\left((\I-\Pi_\rho)\rho\right)=0
\end{equation}
and $\Omega_k\succeq0$, we have $((\I-\Pi_\rho)\otimes\I)\,\Omega_k=0$, and by permutation symmetry the same holds on every copy. Hence $\Omega_k=\Pi_\rho^{\otimes k}\Omega_k\Pi_\rho^{\otimes k}$, and
\begin{equation}
    \tr\left((O_P^A\otimes\I^B)\,\Omega_k\right)=\tr\Big(\left(V_\rho^{\otimes k}(O_P^A\otimes\I^B)V_\rho^{\dagger\otimes k}\right)\left(V_\rho^{\otimes k}\Omega_kV_\rho^{\dagger\otimes k}\right)\Big).
\end{equation}
We can therefore equivalently optimize over $\Omega_k^\rho=V_\rho^{\otimes k}\Omega_kV_\rho^{\dagger\otimes k}$ on $(\C^r)^{\otimes k}$ with marginal $V_\rho\rho V_\rho^\dagger$. Bose symmetry is preserved, since $V_\rho^{\otimes k}$ commutes with permutations, and the partial-transpose constraints are equivalent, since $(\Omega_k^\rho)^{T_l}$ and $\Omega_k^{T_l}$ are related in both directions by congruences with $V_\rho$ and $V_\rho^*$. Combined with the symmetric subspace, the variable has size $\binom{k+r-1}{k}$. If $\rho$ is real, $V_\rho$ can be chosen real and the real-variable reduction still applies.

The same argument applies to the simplified relaxation: by the permutation invariance of $\tr_B\Phi_{k,m}$, every $A$ register has reduced state $\rho_A=\tr_B(\rho)$, and therefore all $A$ registers can be encoded on the support of $\rho_A$.

\section{Tensor decomposition}
\subsection{Heuristic extraction}\label{app:tensor}
We take the optimal solution $\Omega_k$ of Eq.~\eqref{eq:SDPcroof1}. We work on the support of $\rho$ as in Appendix~\ref{app:supportReduction}: with $r=\operatorname{rank}\rho$ and the isometry $V_\rho:\operatorname{supp}\rho\to\C^r$, we consider $\Omega_\rho=V_\rho^{\otimes k}\Omega_kV_\rho^{\dagger\otimes k}$.

To work with real tensors, let $\{G_a\}_{a=0}^{r^2-1}$ be a Hilbert--Schmidt orthonormal basis of Hermitian operators on $\C^r$, consisting of the normalized identity $G_0=\I/\sqrt r$ and the generalized Gell-Mann matrices. Every pure state can be written as $\dyad{\psi}=\sum_ax_aG_a$ with a real vector $x$ satisfying $x_0=1/\sqrt r$ and $\|x\|_2=1$. We define the real symmetric order-$k$ tensor $\Omega^G_{a_1\dots a_k}=\tr\left((G_{a_1}\otimes\dots\otimes G_{a_k})\,\Omega_\rho\right)$, which for the true $\omega_k$ equals $\sum_ip_i\,x^{(i)\otimes k}$. We run PARAFAC (alternating least squares) on $\Omega^G$ with nonnegative weights, obtaining terms $\{p_i,v_i\}$; we normalize each factor such that $v_{i,0}=1/\sqrt r$, absorbing the scale into $p_i$, and recover the pure states $\ket{\psi_i}$ as the top eigenvectors of $\sum_av_{i,a}G_a$. Finally, we repair the decomposition $\{p_i,\ket{\psi_i}\}$ as described in Sec.~\ref{sec:extractingTheDecomposition}.

The number of terms $n$ fitted by PARAFAC is the rank of $\Omega_\rho$, which equals the number of terms when the $\dyad{\psi_i}^{\otimes k}$ are linearly independent.
Any decomposition of $\rho$ has at least $r$ terms, otherwise the repair fails. Conversely, every decomposition of $\rho$ consists of states on $\operatorname{supp}\rho\cong\C^r$, whose state space has dimension $r^2-1$, and therefore by Carath\'eodory's theorem the convex roof of a continuous function is attained by a decomposition with at most $r^2-1$ terms~\cite[Lemma~A-2]{Uhlmann1998}.

Before the extraction, it helps to reduce the rank of the SDP solution. We solve a second SDP, minimizing $\tr(R_3 \ot \I \,\Omega_k)$ for a random Hermitian $R_3$ acting on the first $3$ copies, over the same feasible set with the additional constraint $\tr\left((O_P\otimes\I^B)\Omega_k\right)\leq P_{\rm LB}+\delta_k$ for a small $\delta_k\to0$ with $\delta_k \geq \widehat{P}(\rho)-P_{{\rm LB}_k}$. This selects an extreme point of the near-optimal face, which generically has low rank, instead of a mixture of several optimal solutions.

\subsection{Convergence of the extraction}\label{app:extractionConv}
Given a state $\omega_3 = \sum_i p_i \dyad{\psi_i}^{\otimes 3}$, we can use Jennrich's algorithm to recover uniquely the decomposition $\{p_i, \ket{\psi_i}\}_i$ when the $\dyad{\psi_i}$ are linearly independent. Therefore, we need to select a low-rank extreme point $\Omega_3 = \tr_{A_4B_4\dots A_kB_k}(\Omega_k)$. For this we apply the rank reduction step as shown in App.~\ref{app:tensor}.

Now we need the following observation.
\begin{observation}
    A decomposition $\{q_i, \dyad{\psi_i}\}$ of $\rho$ is extremal in the set of optimal decompositions of $\rho$ with regards to a polynomial $P$, only if $\dyad{\psi_i}$ are linearly independent projectors.
\end{observation}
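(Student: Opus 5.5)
The plan is to argue by contraposition: assume an optimal decomposition $\{q_i,\ket{\psi_i}\}_{i=1}^n$ of $\rho$ (optimal meaning it attains $\widehat{P}(\rho)$) in which the projectors $\dyad{\psi_i}$ are linearly dependent, and write it as a nontrivial convex combination of two other optimal decompositions. The natural ambient object is the measure $\mu=\sum_i q_i\,\delta_{\psi_i}$ on pure states (equivalently, the weight vector $q$ on the fixed support $\{\psi_i\}$). Extremality is then meant inside the convex set of decompositions, viewed as such measures, that satisfy $\sum_i q_i\dyad{\psi_i}=\rho$ and $\sum_i q_iP(\psi_i)=\widehat{P}(\rho)$.

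First, linear dependence gives real coefficients $c\neq 0$ with $\sum_i c_i\dyad{\psi_i}=0$. The coefficients can be taken real because the projectors are Hermitian: take real and imaginary parts of any complex relation. Taking the trace gives $\sum_i c_i=0$, so $c$ has both positive and negative entries.

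Second, perturb the weights: set $q^{\pm}=q\pm t c$ with $t>0$ small enough that $q^\pm_i\geq 0$. This is possible since all $q_i>0$ and there are finitely many terms. Both $q^\pm$ are probability vectors and still satisfy $\sum_i q^\pm_i\dyad{\psi_i}=\rho$, so they are decompositions of $\rho$ with the same states. Third, use optimality. The objective $f(t)=\sum_i (q_i+tc_i)P(\psi_i)$ is affine in $t$ and attains its minimum $\widehat{P}(\rho)$ at the interior point $t=0$ of the feasible interval. Its slope $\sum_i c_iP(\psi_i)$ must therefore vanish, so both $q^\pm$ are optimal as well. Since $q=\tfrac12(q^++q^-)$ with $q^+\neq q^-$, the original decomposition is not extremal, which proves the contrapositive.

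The main obstacle is making ``extremal in the set of decompositions'' precise. A decomposition is a finite list, so it should be identified with its measure, merging repeated states, so that the convex structure is well defined. Under this identification the perturbation above corresponds to two distinct measures. I would also note that repeated or phase-equivalent states count as the same projector, which is consistent with linear independence being stated for the projectors $\dyad{\psi_i}$ rather than the vectors. Finally, I would remark that the argument does not use the specific form of $P$, only the linearity of the cost in the weights.
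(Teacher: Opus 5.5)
Your proposal is correct and follows the paper's argument: a real null relation $\sum_i c_i\dyad{\psi_i}=0$ with $\sum_i c_i=0$, perturbed weights $q\pm tc$ that are again decompositions of $\rho$ and must both be optimal, and the original decomposition as their midpoint. Your zero-slope argument is equivalent to the paper's remark that two values bounded below by $\widehat{P}(\rho)$ and averaging to $\widehat{P}(\rho)$ must both equal it. Your remarks on choosing $c$ real and on merging repeated states make explicit what the paper leaves implicit.
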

\begin{proof}
We will do so by contradiction as follows. Given an optimal decomposition $\{q_i, \dyad{\psi_i}\}$ of $\rho$, if the projectors $\dyad{\psi_i}$ are linearly dependent, there exists a nonzero real vector $c$ such that $\sum_i c_i \dyad{\psi_i} = 0$ with $\sum_i c_i=0$.
Assuming that this is the case, for a small $0<t\leq \min_i \tfrac{q_i}{|c_i|}$, let us define two other decompositions $\{q_i + t c_i, \dyad{\psi_i}\}$ and $\{q_i - t c_i, \dyad{\psi_i}\}$ such that their midpoint is the decomposition $\{q_i, \dyad{\psi_i}\}$.
Since $\sum_i (q_i+t c_i) P(\psi_i) \geq \widehat{P}(\rho)=\sum_i q_i P(\psi_i)$ and $\sum_i (q_i- t c_i) P(\psi_i) \geq \widehat{P}(\rho)$, then $\sum_i (q_i+ t c_i) P(\psi_i) = \sum_i (q_i- t c_i) P(\psi_i)= \widehat{P}(\rho)$.
These decompositions are also optimal and their midpoint is the decomposition extracted $\{q_i, \dyad{\psi_i}\}$, which proves that it is not an extreme point. Therefore, any extreme point of the convex set of optimal decompositions must have linearly independent projectors.
\end{proof}

For almost every $R$, the minimizer $\omega_3^*$ of $\tr(R\,\omega_3)$ among the $\omega_3$ of optimal decompositions is unique. The decomposition of $\rho$ giving $\omega_3^*$ has linearly independent projectors. Using the same argument shown above, if it has linearly dependent terms, we can define two decompositions whose associated $\omega_3$ is optimal with midpoint $\omega_3^*$. By uniqueness, both decompositions give $\omega_3^*$. We can simply increase $t$ until a weight vanishes and repeat to end up at a decomposition into linearly independent terms. Therefore, $\omega_3^*$ decomposes into linearly independent terms.

As $k\to\infty$ the rank-reduced $\Omega_3$ converges to $\omega_3(\mu_e)=\sum_ip_i\dyad{\psi_i}^{\otimes3}$, where
$\mu_e$ is the extreme optimal decomposition singled out by $R$, since $\delta_k \geq \widehat{P}(\rho)-P_{{\rm LB}_k}$ leaves $\omega_3(\mu_e)$ feasible during the rank reduction step.
By App.~\ref{app:ProofCongNonExtHier}, every limit point as $k\to \infty$ corresponds to the state $\omega_3$ of an optimal decomposition, and for almost every $R$ the minimizer of $\tr(R\,\cdot)$ on this set is unique and has linearly independent projectors.
Then, Jennrich's algorithm is continuous at $\omega_3(\mu_e)$, and therefore applied to $\Omega_3$ it gives a decomposition of a state $\sigma$ that is $\epsilon$-close to $\rho$ with $\epsilon\to0$ as $k\to\infty$, which after the repair step has value converging to $\widehat{P}(\rho)$.

This argument does not extend to LOCC since the optimal decompositions do not form a face of the set of decompositions of $\rho$, and the extreme points may not have linearly independent projectors, in which case we cannot use Jennrich's algorithm.

\section{Linear program for non-polynomial functions}\label{app:LP}

\subsection{Construction of the LP and of the single SDP}
We optimize over symmetric polynomials of degree at most $k$. Since polynomials of lower degree can be lifted to degree $k$ (Lemma~\ref{lem:lifting}), we parametrize them as $P(\a)=\sum_{\lambda\vdash k}c_\lambda s_\lambda(\a)$, and the LP variables are the coefficients $\vec c=(c_\lambda)_\lambda$. Given the function $F$, we require $P(\a)\leq F(\a)$ for all spectra $\a$, and enforce this constraint discretely on a grid of spectra $\{\a^g\}_g$, where the values $s_\lambda(\a^g)$ and $F(\a^g)$ are precomputed. By symmetry, it suffices to take grid points with $\a_{g,1}\geq\a_{g,2}\geq\dots\geq\a_{g,d_A}\geq0$ and $\sum_i\a_{g,i}=1$.

Writing $\mu_\lambda(\Phi)=\tr\left((\tfrac{1}{f^\lambda}\Pi^\lambda\otimes\I^B)\Phi\right)$, the objective is $\tr\left((O_P\otimes\I^B)\Omega_k\right)=\sum_\lambda c_\lambda\,\mu_\lambda(\Omega_k)$, and therefore  the bound reads
\begin{equation}
    F_{{\rm LB}_k}=\max_{\vec c}\ \min_{\Omega_k}\ \sum_\lambda c_\lambda\,\mu_\lambda(\Omega_k)
    \quad\text{s.t.}\quad \sum_\lambda c_\lambda s_\lambda(\a^g)\leq F(\a^g)\ \ \forall g,
\end{equation}
with $\Omega_k$ ranging over the feasible set of Eq.~\eqref{eq:SDPcroof1}. Since the objective is bilinear, the set of feasible $\Omega_k$ is convex and compact, and the set of valid $\vec{c}$ is convex, we can exchange the maximum and the minimum~\cite{Sion1958}. Then, for fixed $\Omega_k$, the inner maximization is an LP whose dual is
\begin{equation}
    \min_{y\geq0}\ \sum_g y_g F(\a^g)\quad\text{s.t.}\quad \sum_g y_g\,s_\lambda(\a^g)=\mu_\lambda(\Omega_k)\ \ \forall\lambda .
\end{equation}
The primal is always feasible, since the constant polynomial $P=\min_g F(\a^g)$ is admissible, so strong duality holds. We merge the LP and the SDP to obtain a single SDP,
\begin{align}
    F_{{\rm LB}_k}=\min_{\Omega_k,\,y}\ & \sum_g y_g F(\a^g)\\
    \text{s.t. }& \Omega_k\ \text{feasible for Eq.~\eqref{eq:SDPcroof1}},\quad y\geq0, \nonumber\\
    & \sum_g y_g\,s_\lambda(\a^g)=\tr\left((\tfrac{1}{f^\lambda}\Pi^\lambda\otimes\I^B)\,\Omega_k\right)\quad\forall\lambda\vdash k. \nonumber
\end{align}
The coefficients $c_\lambda$ of the optimal polynomial are recovered as the dual variables of the equality constraints. Since the grid is discrete, the optimal polynomial $P^*$ may exceed $F$ between grid points. We therefore report $F_{{\rm LB}_k}-\epsilon$, where $\epsilon=\max_\a\left(P^*(\a)-F(\a)\right)$ is estimated from the polynomial returned by the solver. This is a valid lower bound, since $P^*-\epsilon\leq F$ on all spectra.

\subsection{Convergence of the bound}\label{app:LPconvergence}
Here we prove Result~\ref{thm:non-polyHier} for the bound $\widetilde F_{{\rm LB}_k}$ of Eq.~\eqref{eq:FLBexact}, in which $P\leq F$ is imposed on all spectra. Monotonicity follows naturally from the facts that admissible polynomials stay admissible at higher level by lifting (Lemma~\ref{lem:lifting}), and that the SDP is monotone (Appendix~\ref{app:monotonicity}).

By the Weierstrass approximation theorem, for any continuous function $F'$ and any $\eta>0$, we can find a polynomial $P'$ such that $|F'(\a)-P'(\a)| < \eta$ for all spectra $\a$. Then, by averaging over permutations of the variables, for any continuous symmetric function $F$, we can find a symmetric polynomial $P$ such that $|F(\a)-P(\a)| < \eta$ for all spectra $\a$.
Shifting $P$ by $\eta$, for any $F$ and any $\eta>0$, there exists a symmetric polynomial $P$ such that $P(\a) < F(\a)$ and $F(\a)-P(\a) < 2\eta$ for all spectra $\a$. This $P$ is admissible for $\deg P \leq k$, and therefore  by Result~\ref{thm:ConvergenceSDPcroof}, $\lim_{k\to\infty}\widetilde F_{{\rm LB}_k}(\rho)\geq\lim_{k\to\infty}P_{{\rm LB}_k}(\rho)=\widehat P(\rho)\geq\widehat F(\rho)-2\eta$. Conversely, $P\leq F$ implies $\widehat P(\rho)\leq\widehat F(\rho)$ for every admissible $P$, and therefore $\widetilde F_{{\rm LB}_k}(\rho)\leq\widehat F(\rho)$. Since $\eta$ is arbitrary,
\begin{equation}
    \lim_{k\to\infty}\widetilde F_{{\rm LB}_k}(\rho)=\widehat F(\rho).
\end{equation}

With a finite grid, the set of admissible polynomials is larger, and therefore $F_{{\rm LB}_k}\geq\widetilde F_{{\rm LB}_k}$, and $F_{{\rm LB}_k}-\epsilon\leq\widehat F(\rho)$, where $\epsilon$ is the grid error of the returned polynomial. The uncorrected value $F_{{\rm LB}_k}$ is nondecreasing in the level by the same argument as above, but the corrected value $F_{{\rm LB}_k}-\epsilon$ need not be, since the grid error can change between levels. In our numerical results, we observe that increasing the level yields tighter bounds.

\section{LOCC feasibility SDP}\label{app:LOCC}
\subsection{Construction of the SDP}
Let $\ket{\psi}$ be the initial pure state and $\rho$ the target mixed state. By Ref.~\cite{Jonathan_1999}, $\ket{\psi}$ can be converted into $\rho$ by LOCC if and only if there is a decomposition $\{p_i,\ket{\psi_i}\}$ of $\rho$ such that
\begin{equation}\label{eq:LOCCcriterion}
    \sum_i p_i E_j(\psi_i)\leq E_j(\psi)\qquad\text{for all } j=1,\dots,d_A-1 .
\end{equation}
Since all these inequalities must hold for the same decomposition, testing them one monotone at a time is not sufficient. Instead, we combine the LP construction of Appendix~\ref{app:LP} for all the Vidal monotones into a single SDP.

For each $j$, let $\mathcal{P}_j$ be the set of symmetric polynomials $P_j$ of degree at most $k$ that satisfy $P_j(\a^g)\leq E_j(\a^g)$ on all grid points, and let
\begin{equation}
    \epsilon_j(P_j)=\max_{\a}\left(P_j(\a)-E_j(\a)\right)
\end{equation}
be the maximal violation of $P_j$ between grid points. If a decomposition $\{p_i,\ket{\psi_i}\}$ satisfies Eq.~\eqref{eq:LOCCcriterion}, then its operator $\omega_{k}$ is feasible for Eq.~\eqref{eq:SDPcroof1} and, for every $P_j\in\mathcal{P}_j$,
\begin{equation}\label{eq:LOCCnecessary}
\tr\left((O_{P_j}\otimes\I^B)\,\omega_{k}\right)=\sum_ip_iP_j(\a_i)\leq\sum_ip_iE_j(\a_i)+\epsilon_j(P_j)\leq E_j(\psi)+\epsilon_j(P_j),
\end{equation}
where $\a_i$ is the entanglement spectrum of $\ket{\psi_i}$. Therefore, if a certain decomposition can be transformed from $\ket{\psi}$ with LOCC, then it can violate the $j$'th inequality by at most the grid error $\epsilon_j(P_j)$ for each polynomial.

With the moments $\mu_\lambda(\Phi)=\tr\left((\tfrac{1}{f^\lambda}\Pi^\lambda\otimes\I^B)\Phi\right)$ of Appendix~\ref{app:LP}, the optimization of Eq.~\eqref{eq:LOCCminmaxt} is computed by the SDP
\begin{align}\label{eq:LOCCsdp}
    s_k=\min_{\Omega_k,\,y^{(j)},\,s}\ & s\\
    \text{s.t. }& \Omega_k\ \text{feasible for Eq.~\eqref{eq:SDPcroof1}},\quad y^{(j)}\geq0\ \ \forall j, \nonumber\\
    & \sum_g y^{(j)}_g\,s_\lambda(\a^g)=\mu_\lambda(\Omega_k)\quad\forall\lambda\vdash k,\ \forall j, \nonumber\\
    & \sum_g y^{(j)}_g\,E_j(\a^g)-E_j(\psi)\leq s\quad\forall j. \nonumber
\end{align}

For fixed $\Omega_k$, the minimization over each $y^{(j)}$ is the dual LP of Appendix~\ref{app:LP} with $F=E_j$, and therefore by strong duality
\begin{equation}\label{eq:LOCCminmax}
    s_k=\min_{\Omega_k}\ \max_{j}\ \max_{P_j\in\mathcal{P}_j}\Big(\tr\left((O_{P_j}\otimes\I^B) \Omega_k\right)-E_j(\psi)\Big).
\end{equation}
Thus $s_k$ is the smallest worst-case violation of Eq.~\eqref{eq:LOCCcriterion} over the relaxations of decompositions of $\rho$, where each violation is measured with the best grid-admissible polynomial. For a single monotone, $s_k=F_{{\rm LB}_k}-E_j(\psi)$ with $F=E_j$.

Let $t$ be a probability vector with $t_j \geq 0$ and $\sum_j t_j = 1$. Then,
\begin{align}\label{eq:LOCCminmaxT}
    s_k=\min_{\Omega_k} \max_t \max_{P_j\in\mathcal{P}_j}\ \sum_j t_j\Big(\tr\left((O_{P_j}\otimes\I^B)\Omega_k\right)-E_j(\psi)\Big).
\end{align}
With the substitution $Q_j=t_jP_j$, the constraint $P_j\in\mathcal{P}_j$ becomes $Q_j(\a^g)\leq t_jE_j(\a^g)$ for all $g$, and the objective is linear in $\Omega_k$ and in $(t,Q_j)$. The feasible set of $\Omega_k$ is convex and compact, and the set of $(t,Q_j)$ is convex. Again, here the minimum and the maximum can be exchanged by Ref.~\cite{Sion1958}:
\begin{align}
    s_k&=\min_{\Omega_k}\ \max_{t,\,Q_j}\ \sum_j \Big(\tr\left((O_{Q_j}\otimes\I^B)\Omega_k\right)-t_jE_j(\psi)\Big)\nonumber\\
    &=\max_{t,\,Q_j}\ \min_{\Omega_k}\ \sum_j \Big(\tr\left((O_{Q_j}\otimes\I^B)\Omega_k\right)-t_jE_j(\psi)\Big).\label{eq:LOCCmaxmin}
\end{align}
Let $t^*$ and $Q_j^*$ be the optimal vector $t$ and optimal polynomials, which are given by the dual variables of the SDP~\eqref{eq:LOCCsdp}; for $t^*_j>0$, $Q^*_j=t^*_jP^*_j$ with $P^*_j\in\mathcal{P}_j$.
By Eq.~\eqref{eq:LOCCmaxmin}, they satisfy
\begin{equation}\label{eq:LOCCcertificateIneq}
    \sum_j \Big(\tr\left((O_{Q^*_j}\otimes\I^B)\Omega_k\right)-t^*_jE_j(\psi)\Big)\geq s_k
\end{equation}
for every feasible $\Omega_k$.

Let $\epsilon_j=\max_\a\big(Q^*_j(\a)-t^*_jE_j(\a)\big)$, which equals $t^*_j\,\epsilon_j(P^*_j)$ for $t^*_j>0$. If $\ket{\psi}$ can be converted into $\rho$ by LOCC, some decomposition satisfies Eq.~\eqref{eq:LOCCcriterion}, and its $\omega_k$ is feasible. As in Eq.~\eqref{eq:LOCCnecessary}, $\tr\big((O_{Q^*_j}\otimes\I^B)\,\omega_k\big)=\sum_ip_iQ^*_j(\a_i) \leq t^*_jE_j(\psi)+\epsilon_j$, and therefore Eq.~\eqref{eq:LOCCcertificateIneq} gives
\begin{equation}
    s_k \leq \sum_j \Big(\tr\left((O_{Q^*_j}\otimes\I^B)\,\omega_k \right)-t^*_jE_j(\psi)\Big) \leq \sum_j \epsilon_j.
\end{equation}
Thus, if $s_k > \sum_j \epsilon_j$, $\ket{\psi}$ cannot be converted into $\rho$ by LOCC. If the condition does not hold, the test does not tell us anything, in which case, for $k\geq3$, we can extract the decomposition from $\Omega_k$ and repair it as in Sec.~\ref{sec:extractingTheDecomposition}, and if the criterion Eq.~\eqref{eq:LOCCcriterion} holds, the state can be converted into $\rho$ by LOCC.
In this procedure, the grid errors $\epsilon_j$ need to be computed in a postprocessing step from the polynomials returned by the solver.

\subsection{Detection at finite level}
We impose $P_j\leq E_j$ on all spectra.
Consider the decomposition $\{p_i, \dyad{\varphi_i}\}$ of $\rho$ that minimizes
\begin{equation}
    s^*=\max_t \sum_j t_j \left( \left(\sum_i p_i E_j(\varphi_i)\right)  - E_j(\psi) \right)
\end{equation}
over probability vectors $t$.
If $\ket{\psi}\overset{\text{LOCC}}{\nrightarrow}\rho$, at least one Vidal monotone $E_j$ will be exceeded by the decomposition and setting $t_j=1$, we get that $s^*>0$.
We know $\widetilde{s}_k \leq s^*$ since we are relaxing the set of feasible $\omega_k$.
By App.~\ref{app:ProofCongNonExtHier}, as $k\to\infty$, the few-copy marginals of $\Omega_k$ approach mixtures $\sum q_i\dyad{\phi_i}^{\otimes n}$ with $\{q_i, \dyad{\phi_i}\}$ a decomposition of $\rho$, and since by App.~\ref{app:LPconvergence} each $E_j$ can be approximated to arbitrary precision from below, $\widetilde{s}_k \to s^*$ as $k \to \infty$ and every non-convertible pair is eventually detected.

\section{Further analysis}
In this appendix, we provide additional numerical results for the entanglement of formation (Fig.~\ref{fig:EoF}), the linear entropy (Fig.~\ref{fig:namedLinearEntropy}) and the Vidal monotones (Fig.~\ref{fig:Vidal}).
We also show how the LP adapts to different states by choosing different lower-bounding polynomials, which are tangent to the reference function at different points of the spectrum (Fig.~\ref{fig:polynomials}).

\begin{figure*}
  \centering
  \namedbars{figures/data/named_EF.csv}{$E_{F}\,/\,\mathrm{UB}$}{Entanglement of formation $E_F$}
    \caption{Lower bounds on the entanglement of formation $E_F$ for different PPT and NPT states, from the SDP with embedded LP. Bars: lower bounds from the simplified hierarchy at the levels $(k,m)$ shown in the legend. Markers: upper bounds from the decomposition extracted from the SDP solution at $k=m=3$ (orange diamonds) and from HJW gradient descent over decompositions (black crosses). All values are divided by the best upper bound UB, printed above each state. Here, UB is the exact convex roof for two qubits (marked $^*$), and the lowest of the two upper bounds otherwise. Notice that the gap between upper and lower bounds decreases with the hierarchy level.}
  \label{fig:EoF}
\end{figure*}
\begin{figure*}
{Linear Entropy $S_L$}
  \centering
  \namedbars{figures/data/named_SL.csv}{$S_{L,\mathrm{LB}}\,/\,\mathrm{UB}$}{}
  \caption{Lower bounds on the linear entropy for different PPT and NPT states. Bars: lower bounds from the simplified hierarchy at the levels $(k,m)$ shown in the legend. Markers: upper bounds from the decomposition extracted from the SDP solution at $k=m=3$ (orange diamonds) and from HJW gradient descent over decompositions (black crosses). All values are divided by the best upper bound UB, printed above each state. Here, UB is the exact convex roof for two qubits (marked $^*$), and the lowest of the two upper bounds otherwise.}
  \label{fig:namedLinearEntropy}
\end{figure*}

\begin{figure*}
  \centering
  \namedbars{figures/data/named_E1.csv}{$E_{1,\mathrm{LB}}\,/\,\mathrm{UB}$}{Vidal monotone $E_1=1-\alpha_1$}
  \caption{Lower bounds on the Vidal monotone $E_1$ for different PPT and NPT states, from the SDP with embedded LP. Bars: lower bounds from the simplified hierarchy at the levels $(k,m)$ shown in the legend. Markers: upper bounds from the decomposition extracted from the SDP solution at $k=m=3$ (orange diamonds) and from HJW gradient descent over decompositions (black crosses). All values are divided by the best upper bound UB, printed above each state. Here, UB is the exact convex roof for two qubits (marked $^*$), and the lowest of the two upper bounds otherwise.}
  \label{fig:Vidal}
\end{figure*}

%
% \polypanel{measure column prefix (EF or E1)}: the optimal polynomial of each state, and the points where it touches F
% (the Schmidt coefficients of the Wootters decomposition).
\providecommand{\polypanel}[1]{%
  \addplot[cA, thick] table[x=alpha, y=werner2] {figures/data/polynomials_#1.csv};\addlegendentry{Werner $2\times2$}
  \addplot[cA, only marks, mark=*, mark size=1.8pt, forget plot]
    table[x=alpha_werner2, y=#1_werner2] {figures/data/polynomials_touching.csv};
  \addplot[cB, thick] table[x=alpha, y=gisin] {figures/data/polynomials_#1.csv};\addlegendentry{Gisin}
  \addplot[cB, only marks, mark=*, mark size=1.8pt, forget plot]
    table[x=alpha_gisin, y=#1_gisin] {figures/data/polynomials_touching.csv};
  \addplot[cC, thick] table[x=alpha, y=mems] {figures/data/polynomials_#1.csv};\addlegendentry{MEMS}
  \addplot[cC, only marks, mark=*, mark size=1.8pt, forget plot]
    table[x=alpha_mems, y=#1_mems] {figures/data/polynomials_touching.csv};
  \addplot[cD, thick] table[x=alpha, y=bellprod] {figures/data/polynomials_#1.csv};\addlegendentry{Bell + $\ket{01}$}
  \addplot[cD, only marks, mark=*, mark size=1.8pt, forget plot]
    table[x=alpha_bellprod, y=#1_bellprod] {figures/data/polynomials_touching.csv};}%
\begin{figure*}
  \centering
  \begin{tikzpicture}
    \begin{groupplot}[group style={group size=2 by 1, horizontal sep=1.6cm}, width=0.46\textwidth,
      height=0.32\textwidth, tick label style={font=\footnotesize}, label style={font=\small}, title style={font=\small},
      xlabel={$\alpha_1$}, ylabel={$F(\alpha)$, $P(\alpha)$}, xmin=0, xmax=1, table/col sep=comma,
      legend style={font=\scriptsize, at={(0.5,0.02)}, anchor=south, draw=none, fill=white, fill opacity=0.85,
      text opacity=1}, legend cell align=left]
    \nextgroupplot[title={(a) Entanglement of formation, level $(3,4)$}, ymin=-0.8, ymax=1.15]
      \addplot[cGray, dashed, forget plot] coordinates {(0,0) (1,0)};
      \addplot[black, very thick] table[x=alpha, y=F] {figures/data/polynomials_EF.csv};\addlegendentry{$S(\alpha)$}
      \polypanel{EF}
    \nextgroupplot[title={(b) Vidal monotone $E_1$, level $(3,3)$}, ymin=-0.4, ymax=0.62]
      \addplot[cGray, dashed, forget plot] coordinates {(0,0) (1,0)};
      \addplot[black, very thick] table[x=alpha, y=F] {figures/data/polynomials_E1.csv};\addlegendentry{$E_1(\alpha)$}
      \polypanel{E1}
    \end{groupplot}
  \end{tikzpicture}
  \caption{
  Polynomials $P\leq F$ extracted from the SDP with embedded LP for different $2$-qubit states. For each state, this is the polynomial that yields the best bound on the convex roof on the grid, at level $(3,4)$ for $E_F$ (a) and at level $(3,3)$ for $E_1$ (b). Here we show
  that the LP is adaptive depending on the state and there is not a single polynomial that yields the best bound on the convex roof for all states at the same level. The states used are $\{$ Werner $2\times2$: singlet weight $p=0.8$; Gisin: $\lambda=0.9$, $\theta=\pi/8$; MEMS: $C=0.8$; Bell + $\ket{01}$: $p=0.6$ $\}$. The polynomials are shown as a function of the entanglement spectrum $\a = (\a_1, \a_2) = (\a_1, 1-\a_1)$.}
  \label{fig:polynomials}
\end{figure*}

\section{Numerical implementation}\label{app:numerics}
All the SDPs are solved with the interior-point conic solver QICS~\cite{qics}, with its default tolerances.
The PPT entangled states from the figures come from the library \texttt{bound-entangled}~\cite{boundEntangled} and the NPT ones are built with \texttt{toqito}~\cite{toqito}. For the Horodecki $3\times3$ and $2\times4$ states we take the parameter $a=0.5$ in Figs.~\ref{fig:EoF}, \ref{fig:namedLinearEntropy} and~\ref{fig:Vidal}. The parameters of the NPT states are the ones given in the caption of Fig.~\ref{fig:polynomials}.

\subsection{Upper bounds by gradient descent}\label{app:HJW}
By the HJW theorem~\cite{HJW}, every decomposition of $\rho=\sum_i\lambda_i\dyad{e_i}$
into $M$ pure states is obtained
by mixing the vectors $\sqrt{\lambda_i}\ket{e_i}$ with an $M\times M$ unitary. Therefore every unitary provides a decomposition of
$\rho$ and hence an upper bound on the convex roof. As in Ref.~\cite{Rothlisberger2009}, we minimize this upper bound by
gradient descent over the unitaries, with $M=r^2$ and several random starting points, and keep the best value.

%
% \memcurve{state}{color}{mark}{legend}: the measured memory of one state (column mem_<state> of memory.csv), broken
% between the groups of levels of equal degree m (pos 1-2: m = 2, pos 4-6: m = 3, pos 8-11: m = 4).
\providecommand{\memcurve}[4]{%
  \addplot[#2, thick, mark=#3, mark size=1pt, restrict x to domain=0:2.5] table[x=pos, y=mem_#1] {\memfile};
  \addlegendentry{#4}
  \addplot[#2, thick, mark=#3, mark size=1.8pt, restrict x to domain=3.5:6.5, forget plot]
    table[x=pos, y=mem_#1] {\memfile};
  \addplot[#2, thick, mark=#3, mark size=1.8pt, restrict x to domain=7.5:11.5, forget plot]
    table[x=pos, y=mem_#1] {\memfile};}%
\providecommand{\fulllevel}[1]{\fill[cGray!15] (axis cs:#1-0.4,0.01) rectangle (axis cs:#1+0.4,32);}%
\begin{figure}
  \centering
  \def\memfile{figures/data/memory.csv}%
  \begin{tikzpicture}
    \begin{axis}[name=memaxis, width=0.85\columnwidth, height=0.6\columnwidth, ymode=log, ymin=0.01, ymax=32, xmin=0.4, xmax=11.6,
      xtick=data, xticklabels from table={figures/data/memory.csv}{level}, xlabel={level $(k,m)$},
      ytick={0.01,0.1,1,10}, yticklabels={10\,MB,100\,MB,1\,GB,10\,GB}, ylabel={memory},
      xticklabel style={rotate=45, anchor=north east, font=\scriptsize, inner sep=1pt},
      tick label style={font=\footnotesize}, label style={font=\small}, xtick style={draw=none},
      ymajorgrids, grid style={cGray!25}, clip=false, table/col sep=comma,
      legend to name=memlegend,
      legend style={font=\scriptsize, draw=none, legend columns=3,
      /tikz/every even column/.append style={column sep=4pt}}, legend cell align=left]
      \fulllevel{2}\fulllevel{6}\fulllevel{11}
      \addplot[draw=none, forget plot] table[x=pos, y expr=1] {\memfile};% the ticks (xtick=data)
      \memcurve{r4a2}{cA}{*}{$r=4$, $r_A=2$}
      \memcurve{r4a3}{cB}{square*}{$r=4$, $r_A=3$}
      \memcurve{r9a3}{cC}{triangle*}{$r=9$, $r_A=3$}
      \memcurve{r4a4}{cD}{diamond*}{$r=4$, $r_A=4$}
      \memcurve{r8a4}{cE}{pentagon*}{$r=8$, $r_A=4$}
      \memcurve{r16a4}{cF}{star}{$r=16$, $r_A=4$}      \node[font=\scriptsize, anchor=south] at (axis cs:1.5,32) {$m=2$};
      \node[font=\scriptsize, anchor=south] at (axis cs:5,32) {$m=3$};
      \node[font=\scriptsize, anchor=south] at (axis cs:9.5,32) {$m=4$};
    \end{axis}
    \node[anchor=north, yshift=-2pt] at (memaxis.outer south) {\pgfplotslegendfromname{memlegend}};
  \end{tikzpicture}
  \caption{Memory needed by the SDP~\eqref{eq:marginals_SDP} with the QICS solver~\cite{qics} at different levels for random states $\rho$ with $r=\operatorname{rank}(\rho)$ and $r_A = \operatorname{rank}(\tr_B(\rho))$. When $k=m$, the full SDP~\eqref{eq:SDPcroof1} is used.}
  \label{fig:memory}
\end{figure}

\end{widetext}

\end{document}